\newcommand{\denselist}{\itemsep 0pt\parsep=1pt\partopsep 0pt}
\newcommand{\bitem}{\begin{itemize}\denselist}
\newcommand{\eitem}{\end{itemize}}
\newcommand{\benum}{\begin{enumerate}\denselist}
\newcommand{\eenum}{\end{enumerate}}
\newcommand{\defn}[1]           {{\textit{\textbf{\boldmath #1\/}}}}
\newcommand{\Prob}[1]{\text{Prob}\left\{ #1 \right\}}
\def\loglog{\operatorname{loglog}}
\def\poly{\operatorname{poly}}
\def\polylog{\operatorname{polylog}}
\def\eps{\varepsilon}
\newtheorem{theorem}{Theorem}
\newtheorem{lemma}[theorem]{Lemma}
\newtheorem{definition}{Definition}
\newtheorem{proposition}{Proposition}
\newenvironment{restate}[1]
{\par \vspace{1.5ex} \noindent \textbf{Statement of #1.}  \em}
{\par}
\newcommand{\full}[1]{}               
\newcommand{\old}[1]{{}}
\newlength\abovesectionskip
\newlength\belowsectionskip
\def\sectionfont{\normalfont\Large\bfseries}
\newlength\abovesubsectionskip
\newlength\belowsubsectionskip
\def\subsectionfont{\normalfont\large\bfseries}
\newlength\abovesubsubsectionskip
\newlength\belowsubsubsectionskip
\def\subsubsectionfont{\normalfont\normalsize\bfseries}
\newlength\aboveparagraphskip
\newlength\belowparagraphskip
\def\paragraphfont{\normalfont\normalsize\bfseries}
\def\section{\@startsection{section}{1}{\z@}{-\abovesectionskip}%
              {\belowsectionskip}{\sectionfont}}
\def\subsection{\@startsection{subsection}{1}{\z@}{-\abovesubsectionskip}%
                 {\belowsubsectionskip}{\subsectionfont}}
\def\subsubsection{\@startsection{subsubsection}{1}{\z@}%
                    {-\abovesubsubsectionskip}{\belowsubsubsectionskip}%
                    {\subsubsectionfont}}
\def\paragraph{\@startsection{paragraph}{4}{\z@}{-\aboveparagraphskip}%
                {\belowparagraphskip}{\paragraphfont}}
\def\compactify{\itemsep=0pt \topsep=0pt \partopsep=0pt \parsep=0pt}
\let\latexusecounter=\usecounter
\begin{document}

\title{Complex Contagions in Kleinberg's Small World Model}
\author{Roozbeh Ebrahimi\thanks{Department of Computer Science, Stony Brook University, Stony Brook, NY 11794. \texttt{\{rebrahimi,jgao,gghasemiesfe\}@cs.stonybrook.edu}} \and Jie Gao$^\ast$ \and Golnaz Ghasemiesfeh$^\ast$
\and Grant Schoenebeck\thanks{Department of Computer Science and Engineering, University of Michigan, Ann Arbor, Michigan, MI 48109. \texttt{schoeneb@umich.edu}}}

\clearpage\maketitle
\thispagestyle{empty}



\begin{abstract}

Complex contagions describe diffusion of behaviors in a social network in settings where spreading requires the influence by two or more neighbors. In a $k$-complex contagion, a cluster of nodes are initially infected, and additional nodes become infected in the next round if they have at least $k$ already infected neighbors. It has been argued that complex contagions better model behavioral changes such as adoption of new beliefs, fashion trends or expensive technology innovations. This has motivated rigorous understanding of spreading of complex contagions in social networks. Despite simple contagions ($k=1$) that spread fast in all small world graphs, how complex contagions spread is much less understood. Previous work~\cite{Ghasemiesfeh:2013:CCW} analyzes complex contagions in Kleinberg's small world model~\cite{kleinberg00small} where edges are randomly added according to a spatial distribution (with exponent $\gamma$) on top of a two dimensional grid structure. It has been shown in~\cite{Ghasemiesfeh:2013:CCW} that the speed of complex contagions  differs exponentially when $\gamma=0$ compared to when $\gamma=2$.

In this paper, we fully characterize the entire parameter space of $\gamma$ except at one point, and provide upper and lower bounds for the speed of $k$-complex contagions.  We study two subtly different variants of Kleinberg's small world model and show that, with respect to complex contagions,  they behave differently.  For each model and each $k \geq 2$, we show that there is an intermediate range of values, such that when $\gamma$ takes any of these values, a $k$-complex contagion spreads quickly on the corresponding graph, in a polylogarithmic number of rounds.  However, if $\gamma$ is outside this range, then  a $k$-complex contagion requires a polynomial number of rounds to spread to the entire network.
\end{abstract}

\newpage
\pagestyle{plain}\setcounter{page}{1}


\section{Introduction}

Social behavior is undoubtedly one of the defining characteristics of us as a species. Social acts are influenced by the behavior of others while at same time influencing them.
New social behaviors may emerge and spread in the network like a contagion. Some of these contagions are beneficial (e.g., adopting healthy lifestyle) or profitable (e.g., viral marketing), while some others are destructive and undesirable (such as teenager smoking, alcohol abuse, or vandalism). To effectively promote desirable contagions and discourage undesirable ones, the first step is to understand how these contagions spread in networks and what are the important parameters that lead to fast spreading.

Social contagions can be categorized by the way they spread in networks. Our focus in this paper is on contagions that are \emph{complex}, contagions that require social reaffirmation from multiple neighbors, as opposed to \emph{simple} ones, which can spread through a single contact. Simple contagions are adequate models for many spreading phenomena such as rumors, disease, etc. But, when a spreading contagion is concerned with individual's actions and behavioral changes, it has been argued in sociology literature that complex contagions represent most of the realistic settings. This model of contagion makes an important distinction between the \emph{acquisition} of information and the decision to \emph{act} on the information. While it takes only a single tie for people to hear about a new belief, technology, fad or fashion, ``it is when they see people they know getting involved, that they become most susceptible to recruitment'', as Centola and Macy~\cite{G08} explain.

Many examples of complex contagions have been reported in social studies, including buying pricey technological innovations, changes in social behaviors, the decision to migrate, etc.~\cite{Coleman:1966,centola2010spread}. Studies of large scale data sets from online social networks, like Facebook and Twitter, have confirmed the existence of the complex contagion phenomenon as well~\cite{ugander12,Romero11}.

The speed of simple contagions is inherently linked to the diameter of the network. As such, almost all generative social network models support fast (polylogarithmic) spreading of simple contagions, because they have a small diameter (reflecting the small world property of real world social networks)~\cite{kleinberg00small,nmw00}.
But much less is known about the network properties that enable fast spreading of complex contagions. Complex contagions do not adhere to sub-modularity and sub-additivity upon which many analyses depend. Also, the super-additive character of complex contagions means that they are integrally related to community structure, as complex contagions intuitively spread better in dense regions of a network-- an observation concurred by real world experiments~\cite{centola2010spread}.

There have been only a few results on formal analysis of the spreading characteristics of complex contagions. All of them use the model of a $k$-complex contagion, in which time is divided into rounds and a node becomes infected (e.g., adopting the new behavior) in the next round if at least $k$ of its neighbors are infected in the current round.
Immediate questions to answer include whether a complex contagion spreads to the entire graph, and if so, how many rounds it will take. Despite the simplicity of this model, it sufficiently captures the qualitative difference of single versus cumulative exposure in social influences and already embraces a fair amount of technical challenges. This is also the model we adopt in this paper.

\paragraph{Prior Work}

Centola and Macy~\cite{G08} studied complex contagions in the Watts-Strogatz model.
The Watts-Strogatz model has nodes on a ring where nodes nearby on the ring are connected by edges and a small fraction of the edges are uniformly randomly `re-wired'. The network diameter before random rewiring is large (linear in the number of nodes) but with even a small number of randomly rewired edges the diameter quickly drops down. The strong community structure helps a complex contagion to spread but unfortunately the spreading is slow and cannot exploit the random edges that help to spread simple contagions. On the contrary, the random rewiring starts to erode the capability to support complex contagions as the community structure starts to break apart.

Ghasemiesfeh et.\ at~\cite{Ghasemiesfeh:2013:CCW} made this observation on the importance of the distribution of these random edges more rigorous, under the more general small world model proposed by Kleinberg~\cite{kleinberg00small}. In the 2D version of this model, nodes stay on an underlying  2D grid. Nodes that are within a constant Manhattan distance of each other are connected, these edges are denoted as
\emph{strong ties}, which model community structures.
In addition, each node on the grid issues a constant number of random edges following the distribution that $p$ chooses to connect to $q$ with probability proportional to $1/|pq|^{\gamma}$, where $|pq|$ is the Manhattan distance between $p, q$, and $\gamma$ is a non-negative parameter. These randomly chosen edges are labeled as \emph{weak ties} and help create the small world property and support fast spreading of simple contagions. When the initial seeds are chosen to be a cluster of nearby nodes in the underlying metric, it is proved in~\cite{Ghasemiesfeh:2013:CCW}, that for $\gamma = 2$, a $2$-complex contagion spreads in at most $O(\log^{3.5}n)$  rounds in a network of $n$ nodes, but for $\gamma = 0$ the contagion needs $\Omega(\poly(n))$ time to cover the entire network (this setting corresponds to a $2$-dimensional  Newman-Watts model~\cite{nmw00}).

\subsection{Our Results}
We substantially improve upon the prior work~\cite{Ghasemiesfeh:2013:CCW}  by characterizing $k$-complex contagions in the Kleinberg's small world model for \emph{all} values of $k$ and an almost complete range of $\gamma$, the exponent of a spatial distribution upon which random edges are created .

In addition, we also show that a subtle difference in the multiplicity of edges in Kleinberg's small world model implies large differences in what parameter regime quickly spreads $k$-complex contagions. In the model analyzed by~\cite{Ghasemiesfeh:2013:CCW}, it was implicitly assumed that the $k$ random edges placed by the same node are sampled \emph{without replacement}, thus disallowing multi-edges. In this case, to spread the contagion, we need at least $k$ initially infected nodes. In a slight variation, if the $k$ edges are chosen independently of each other, i.e., \emph{with replacement}, the generated graph may have multi-edges. Thus, a single node by itself may start a complex contagion. Both variations have real world interpretations. In the former variation, we need to have different infected neighbors to generate enough influence. In the latter, we count the number of repeated exposures to the new idea/belief, even if the exposure is from the same friend/entity. Analytically, however, this minor difference generates different behaviors. We show below that the parameter range for $\gamma$ to allow fast spreading of a complex contagion in each of these variations differs.

Let $\alpha_k =  \frac{2(k^2 + k + 2)}{k(k + 1)}$ and $\beta_k = \frac{2(k+1)}{k}$.  We show that $k$-complex contagions in the Kleinberg's small world model without multi-edges spread in $O(\polylog(n))$ rounds if $\gamma \in [2, \alpha_k)$, and in $\Omega(\poly(n))$ rounds otherwise (except for $\gamma=\alpha_k$ for which we do not know). We refer to polylogarithmic and polynomial speeds as \emph{fast}  and \emph{slow} respectively. For $k$-complex contagions in the model with allowing multi-edges, the fast spreading parameter range for $\gamma$ changes to $[2, \beta_k)$ instead, outside of which the contagion spreads slowly, again, except for $\gamma=\beta_k$  for which we do not know. This is summarized in Figure~\ref{fig:plot}.  We note that the results for $\gamma = 2$ and $\gamma = 0$ were already known by previous work~\cite{Ghasemiesfeh:2013:CCW}.

\begin{figure}[htbp]
\begin{center}
\includegraphics[scale=0.2]{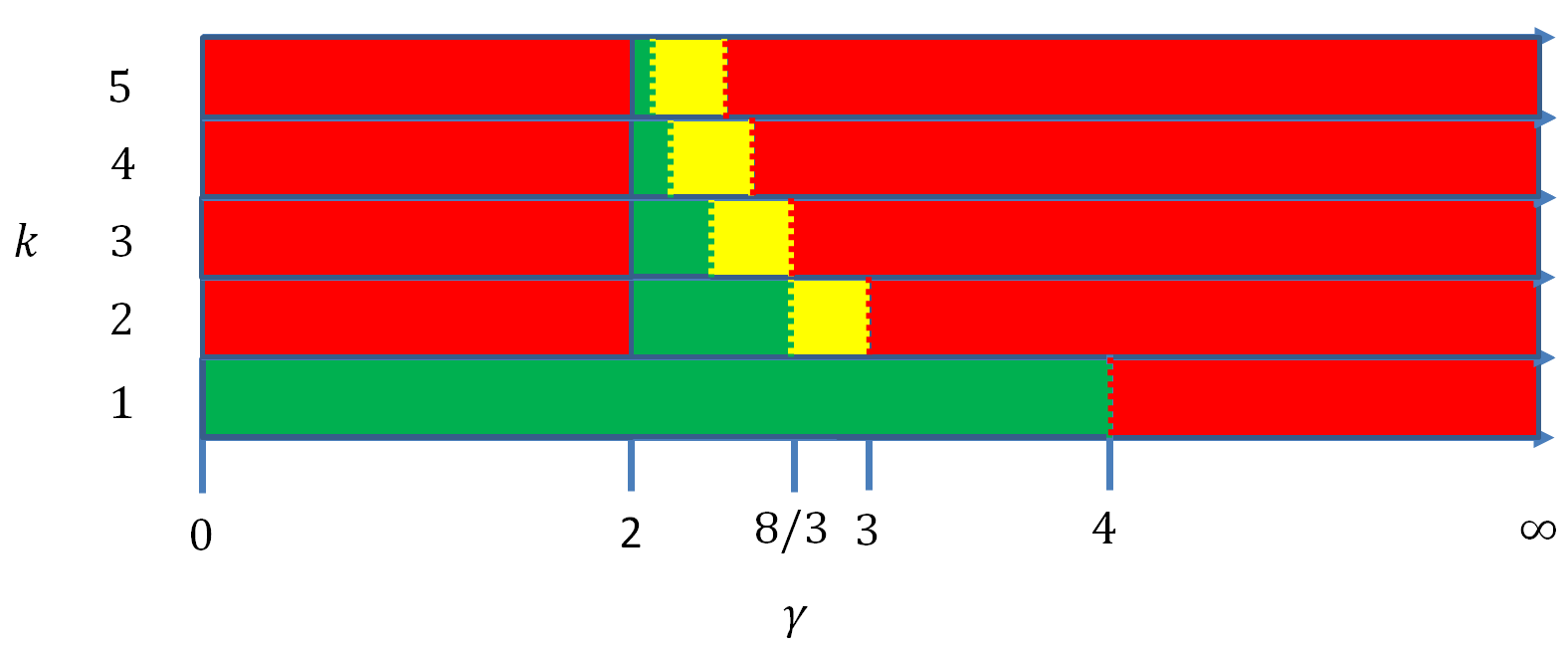}
\caption{Speed for a $k$-complex contagion for Kleinberg's small world model with parameter $\gamma$.  Green indicates polylogarithmic spreading in both models; red indicates that the contagions require polynomial number of rounds to spread in both models; and yellow indicates polylogarithmic spreading in the model with multi-edges and polynomial spreading in the model w/o multi-edges.} \label{fig:plot}
\end{center}
\end{figure}

We note that as $\gamma$ approaches $2$ from the right, it successfully spreads $k$-complex contagions for a greater range of $k$, however, if $\gamma > 2$, this range is always bounded.  Additionally, Figure~\ref{fig:plot} illustrates the results of $k = 1$ (the diameter), which were proven in a series of work~\cite{kleinberg00small,Mar04,Mar05}.  Our results naturally reprove these results for $\gamma > 2$, however, we note that in contrast to the simple contagions, complex contagions do not spread fast for any parameter in the range $[0, 2)$.

We give a conceptual view of our results next. Note that complex contagions require a \emph{wide bridge} to jump from the infected nodes to the uninfected ones. Such wide bridges are already supplied by the strong ties, allowing the complex contagion to spread along the underlying metric of the model. But such spreading is slow. To boost the speed, the weak ties must also form wide bridges. The ability to form such wide bridges is significantly affected by the changes in the parameter $\gamma$. In particular, when $\gamma$ is in the sweet spot interval, the weak ties form wide bridges in a variety of different length scales. This triggers recursive cascading and allows the contagion to grow in size by an exponential rate and thus, the speed of complex contagions becomes fast (polylogarithmic).
This intuition bears some similarity to the analyses of Kleinberg's myopic routing algorithm~\cite{kleinberg00small} and the diameter of Kleinberg's model~\cite{Mar04,Mar05}. In those analyses, if $\gamma$ is in a range where random edges are prevalent in all distance ranges, myopic routing is fast and the diameter is small (both polylogarithmic).

Our lower bound proofs, on the other hand, require a careful dissection of the spreading pattern of complex contagions in the graph. When $\gamma$ is smaller than the sweet spot interval, the weak ties do not form wide bridges (Theorem~\ref{thm:small-gamma}). And when $\gamma$ is larger than the sweet spot interval, the wide bridges formed by the weak ties are too short, and unable to reach far away regions of the network (Theorems~\ref{thm:lowerbound-too-short} and~\ref{thm:lowerbound-cloud-gen}). In both cases, it takes a polynomial number of rounds for the contagion to spread to the entire graph.

The results in this paper substantially enrich previous observations on how community structures are relevant to the spreading of complex contagions~\cite{centola2010spread}. While many dense community structures could spread complex contagions, certain edge distributions could make the spreading exponentially faster than others. The analysis in this paper shows the delicacy and richness of the complex contagion phenomenon.

\paragraph{Organization of This Paper}
In Section~\ref{sec:model}, we present the models and definitions.
Section~\ref{sec:upperbound} discusses the polylogarithmic upper bounds.
Section~\ref{sec:lowerbound-short} shows polynomial lower bounds. 
Finally, in Section~\ref{sec:related}, we outline some of the related works to our paper.


\section{Preliminaries} \label{sec:model}
We formally define a $k$-complex contagion process in a graph. We assume $k$ is a small constant.

\begin{definition}
A \defn{$k$-complex contagion} ${\rm CC}(G, k, \mathcal{I})$ is a contagion that initially infects vertices of $\mathcal{I}$ and spreads over graph $G$. The contagion proceeds in rounds. At each round, each vertex with at least $k$ infected neighbors becomes infected. The vertices of $\mathcal{I}$ are called the initial seeds.
\end{definition}

In the Kleinberg's small world model~\cite{kleinberg00small}, $n$ nodes are defined on a $\sqrt{n} \times \sqrt{n}$ grid\footnote{In order to eliminate the boundary effect, we wrap up the grid into a torus -- i.e., the top boundary is identified with the bottom boundary and the left boundary is identified with the right boundary.}. We connect each node $u$ to nodes within grid Manhattan distance $\lceil \sqrt{m} \rceil$, where $m$ is a constant. These edges are referred to as \defn{strong ties}. In addition, each node generates $m$ random outgoing edges, termed \defn{weak ties}. The probability that node $p$ connects to node $q$ via a random edge is equal to $\lambda /|pq|^\gamma$, in which $|pq|$ is the Manhattan distance of $p, q$ and $\lambda$ is a normalization factor. 
\begin{itemize} \denselist
\item When $\gamma>2$, $\lambda=\Theta(\gamma-2)$. That is, the probability that node $p$ chooses node $q$ as a neighbor through a weak tie is $\lambda /|pq|^\gamma=\Theta(1 /|pq|^\gamma)$.
\item When $0\leq \gamma<2$, $\lambda=\Theta(n^{\gamma/2-1})$. The probability that node $p$ chooses node $q$ as a neighbor through a weak tie is $\lambda /|pq|^\gamma=\Theta(n^{\gamma/2-1} /|pq|^\gamma)$.
\end{itemize}

We consider the directed graph model where the weak ties issued by a node $u$ has $u$ as the tail and strong ties are bidirectional. A $k$-complex contagion spreads in the inverse direction of an edge: a node becomes infected if it follows at least $k$ infected neighbors.

We consider two variations of the Kleinberg's small world model: first, \defn{$K^W_{m,\gamma}(n)$}, where the weak ties are chosen without replacement, that is, we do not permit multiple edges; and secondly,  \defn{$K^I_{m,\gamma}(n)$}, where the weak ties are distributed \emph{independently} (with replacements).  We will see that these two models behave very differently with respect to complex contagions.  We will use the notation \defn{$K^*_{m,\gamma}(n)$} to refer to both  $K^W_{m,\gamma}(n)$ and  $K^I_{m,\gamma}(n)$ simultaneously.

We say that two nodes $u, v$ are a \defn{seed pair} if they are adjacent on the grid structure of $K^*_{m,\gamma}(n)$.  We will say that $k$ nodes $u_1, \cdots, u_k$ are a \defn{$k$-seed cluster} if they form a connected subgraph via \emph{only} the grid structure.


\section{Upper Bounds} \label{sec:upperbound}
In this section, we prove two upper bound theorems. Let $\alpha_k =  \frac{2(k^2 + k + 2)}{k(k + 1)}$ and $\beta_k = \frac{2(k+1)}{k}$. We consider $k$-complex contagions on graphs $K^*_{m,\gamma}(n)$, where $2 \leq  k \leq m=O(1)$. We first prove that a $k$-complex contagion spreads in polylogarithmic number of rounds on $K^{W}_{m,\gamma}(n)$ when  $\gamma \in (2, \alpha_k)$. Using a similar technique, we prove a
polylogarithmic upper bound on the speed of $k$-complex contagions in  $K^{I}_{m,\gamma}(n)$ when  $\gamma \in (2, \beta_k)$. The case of $\gamma=2$ is handled for both of the models together (Theorem~\ref{thm:gamma-2}). We note that the subtle difference in how the networks are defined, makes a rather large difference in how complex contagions spread.

Both of the upper bound proofs are obtained by establishing the following recurrence relation: 
$$T(n) = k + 2T(n^{1-\delta}),$$ 
where $T(n)$ is the time it takes to infect a square of $n$ nodes in $K^{*}_{m,\gamma}(n)$.
 The idea behind the recursion is to start with a square, $S$, of $n$ nodes. Divide $S$ into $n^{\delta}$ many smaller squares for $0<\delta<1$. Fix a particular small square $A$ where the initial seeds locate. We then show that when $A$ is fully infected, there will be a new seed in each of the other $n^{\delta}-1$ small squares and thus they will be infected in parallel. The main difference for allowing versus disallowing multi-edges is that when multi-edges are allowed it is often enough to infect a single node $u$ to obtain a new \emph{seed}. This means that new seeds appear more often compared to the case where multi-edges are not allowed. This is because, with constant probability, $k-1$ of $u$'s local neighbors will have $k$ redundant edges to this vertex, thus allowing the $k$-complex contagion to spread thereafter.

\begin{theorem} \label{thm:upperbound-W-gen-k}
Let $ 2 \leq  k \leq m=O(1)$. Consider a Kleinberg's small world model $K^{W}_{m,\gamma}(n)$ of $n$ vertices where $2 < \gamma< \alpha_k$.
 Let 
 $ 0 < \delta < 1- \frac{\gamma}{\alpha_k}$,  $\frac{1-\delta}{\left ({k + 1 \choose 2} + 1 \right )(1 - \delta) + {k + 1 \choose 2}\gamma/2} \leq c $,  
 and $0 < d$, and $\alpha = {\frac{c}{2}+ \log_{\frac{1}{1-\delta}} 2}$ be constants.
 If we start a $k$-complex contagion from a $k$-seed cluster $\mathcal{I}$, it takes at most $O \left ( \log ^{\alpha} n  \right )$ rounds for the contagion to spread to the whole network with probability at least $1-n^{-d}$.
\end{theorem}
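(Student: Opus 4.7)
The plan is to prove the recurrence $T(n) \leq k + 2\, T(n^{1-\delta})$ by divide-and-conquer, where $T(n)$ denotes the number of rounds needed to infect an $n$-node square in $K^W_{m,\gamma}(n)$ given a $k$-seed cluster inside. I partition the square $S$ into $n^{\delta}$ subsquares of $n^{1-\delta}$ nodes each, let $A$ be the subsquare containing the initial seeds, and recursively infect $A$ in $T(n^{1-\delta})$ rounds. The central step is to show that within $k$ additional rounds, every other subsquare $B$ spawns its own $k$-seed cluster with high probability. The inductive hypothesis, applied in parallel to all subsquares, then infects the remainder of $S$ in another $T(n^{1-\delta})$ rounds.

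The main obstacle is this seed-creation step. I would proceed by picking $\Omega(n^{1-\delta})$ pairwise-disjoint $k$-cliques $\{v_1, \dots, v_k\}$ inside each target $B$ (these exist since strong ties form cliques of size up to $m \geq k$ within Manhattan distance $\lceil\sqrt{m}\rceil$). Call such a clique a \emph{cascade candidate} if, under some ordering of its vertices, $v_i$ has at least $k - i + 1$ weak ties landing in $A$, for every $i$. When this event holds, an induction on $i$ shows that $v_i$ becomes infected in round $i$ after $A$ is fully infected: it has $i - 1$ already-infected strong-tie neighbors among $v_1, \dots, v_{i-1}$ and $k - i + 1$ infected weak-tie neighbors in $A$, meeting the threshold $k$. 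After $k$ rounds the whole clique is infected and constitutes the new $k$-seed cluster in $B$.

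For the probability bound I use that, for $\gamma > 2$, the normalization $\lambda = \Theta(\gamma - 2)$ is a constant, so a single weak tie from $v \in B$ lands in $A$ with probability $p = \Theta(|A|/D^\gamma) = \Theta(n^{1-\delta}/D^\gamma)$, where $D \leq \sqrt{n}$ is the distance from $v$ to $A$. In $K^W_{m,\gamma}(n)$, where the $m$ weak ties of a vertex are distinct, the without-replacement sampling is close enough to Binomial$(m,p)$ that the probability of having at least $j$ weak ties in $A$ is $\Theta(p^j)$ for $p$ small. The total number of weak ties demanded by the cascade is $k + (k-1) + \dots + 1 = \binom{k+1}{2}$, so the cascade probability per candidate is $\Omega(p^{\binom{k+1}{2}})$, giving expected cascade count $\Omega(n^{(1-\delta)(\binom{k+1}{2}+1) - \gamma \binom{k+1}{2}/2})$ per subsquare in the worst case $D = \Theta(\sqrt{n})$. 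The condition $\delta < 1 - \gamma/\alpha_k$ is exactly what makes this exponent strictly positive, so expected cascades are polynomially large. Because the chosen cliques are disjoint and each cascade event depends only on the weak ties of its own clique, the events are independent, so Chernoff yields failure probability $\exp(-n^{\Omega(1)})$ per subsquare, and a union bound over all $n^\delta$ subsquares and $O(\log \log n)$ recursion levels keeps the total failure below $n^{-d}$.

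Finally, I solve the recurrence. The recursion cannot continue indefinitely: at a level where the subsquare size shrinks to a polylogarithmic threshold $n_0 \approx \log^c n$ for the constant $c$ from the hypothesis, the cascade's expected-count guarantee no longer comfortably beats the union bound, so one stops and propagates through the last $n_0$-square via strong ties alone, taking $O(\sqrt{n_0}) = O(\log^{c/2} n)$ rounds. Unrolling $T(n) \leq k + 2\, T(n^{1-\delta})$ through $L = \Theta(\log \log n)$ levels gives $2^L = (\log n)^{\log_{1/(1-\delta)} 2}$ parallel recursive calls at the base, for a total $T(n) = O(\log^{c/2} n \cdot (\log n)^{\log_{1/(1-\delta)} 2}) = O(\log^{\alpha} n)$ as claimed. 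The lingering technical subtleties are verifying the $\Theta(p^j)$ binomial tail in the without-replacement model, and checking that the constants in the cascade probability stay uniform as $\gamma$ approaches $\alpha_k$ (where the multiplicative $\lambda = \Theta(\gamma - 2)$ shrinks but remains a positive constant for any fixed $\gamma > 2$).
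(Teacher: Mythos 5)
Your proposal matches the paper's proof essentially step for step: the same divide-and-conquer into $n^\delta$ subsquares, the same recurrence $T(n) \le k + 2T(n^{1-\delta})$, the same seeding mechanism in which $v_i$ needs $k-i+1$ weak ties to $A$ plus $i-1$ grid ties to earlier-infected clique members for a total of $\binom{k+1}{2}$ required weak ties, and the same base case that falls back to strong-tie propagation in polylogarithmic-size cells. The only cosmetic differences are that the paper factors the argument into a ``recursively spreading'' definition and two lemmas rather than inlining it, and bounds the per-subsquare failure via $(1-P_4)^{|B|/k}\le\exp(-P_4|B|/k)$ rather than invoking Chernoff by name.
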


We first provide a definition:

\begin{definition} \label{def:rec-spreading} Fix constants $\delta, c, d, k$, let $\lambda$ be the normalization factor of $K^{*}_{m,\gamma}(n)$ and let $r = (\frac{6d}{\lambda})^c$.  We say that a $K^{*}_{m,\gamma}(n)$ model is \defn{$(\delta, c, d, k)$-recursively spreading} if whenever 
\begin{enumerate} \denselist
\item $S$ is an $\ell$-sized square ($\sqrt{\ell} \times \sqrt{\ell}$) of vertices  in  $K^{*}_{m,\gamma}(n)$ where $\ell > (r \log^c(n))^{(\frac{1}{1-\delta})}$; 
\item $A$ and $B$ are any two disjoint $\ell^{1-\delta}$-sized subsquares of $S$;  and
\item  $A$ is fully infected, then with probability at least $1-\ell^{2(1-\delta)}/n^d$, there is a new $k$-seed cluster in $B$ that is infected in $\leq k$ rounds. 
\end{enumerate} 
The probability is over the coin flips of the $K^{*}_{m,\gamma}(n)$ model.\end{definition}

We next state two lemmas.  Theorem~\ref{thm:upperbound-W-gen-k} follows directly from these lemmas.

\begin{lemma} \label{lem:upper-bound} Fix constants $\delta, c, d, k$.  If a $K^{*}_{m,\gamma}(n)$ model is $(\delta, c, d, k)$-recursively spreading , then if we start a $k$-complex contagion from a $k$-seed cluster, it takes at most $O \left ( \log ^{\alpha} n  \right )$ rounds for the contagion to spread to the whole network with probability at least $1-n^{-d}$, where $\alpha = {\frac{c}{2}+ \log_{\frac{1}{1-\delta}} 2}$, and the probability is over the coin flips of the $K^{*}_{m,\gamma}(n)$ model.
\end{lemma}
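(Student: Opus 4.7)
The plan is to establish the recurrence $T(\ell) \leq k + 2T(\ell^{1-\delta})$ for $T(\ell)$, the number of rounds needed to fully infect an $\ell$-sized square that contains a $k$-seed cluster, whenever $\ell$ is above the recursive-spreading threshold $(r\log^{c} n)^{1/(1-\delta)}$. With this recurrence in hand, unwinding it until the sub-squares are of side $O(\log^{c/2} n)$ and solving should yield the claimed $O(\log^{\alpha} n)$ bound.

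For the inductive step, I fix such a square $S$ and partition it into $\ell^{\delta}$ disjoint sub-squares each of area $\ell^{1-\delta}$. Let $A$ be the sub-square containing the initial $k$-seed cluster. First I recursively infect $A$ in $T(\ell^{1-\delta})$ rounds. Once $A$ is fully infected, I apply the $(\delta,c,d,k)$-recursive-spreading hypothesis to each of the remaining $\ell^{\delta}-1$ sub-squares $B$, using $A$ as the already-infected sub-square required by the hypothesis; each such $B$ then receives a fresh $k$-seed cluster within $k$ additional rounds, except on a bad event of probability at most $\ell^{2(1-\delta)}/n^{d}$ per target. Finally I recurse on all of the newly-seeded sub-squares in parallel, paying one more $T(\ell^{1-\delta})$. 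This yields $T(\ell) \leq 2T(\ell^{1-\delta}) + k$, with a failure probability controlled by a union bound over the $\ell^{\delta}-1$ invocations at this level.

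For the base case, once the current square has area $\leq r\log^{c} n$ (i.e., the recursion has passed the threshold where the hypothesis ceases to apply), the side length is $O(\log^{c/2} n)$, and I argue that propagation along the \emph{strong ties} alone advances the infected front by $\Omega(1)$ sites per round starting from a $k$-seed cluster (a bootstrap-percolation-style claim that uses $k \leq m$ together with the constant-radius grid neighborhoods), so the base case completes in $O(\log^{c/2} n)$ rounds. Since $\ell \mapsto \ell^{1-\delta}$ iterated from $n$ reaches the threshold after $L = \log_{1/(1-\delta)}\log n + O(1)$ levels, we have $2^{L} = (\log n)^{\log_{1/(1-\delta)} 2}$; unrolling the recurrence gives $T(n) = O(2^{L} \cdot \log^{c/2} n) = O(\log^{\alpha} n)$ with $\alpha = c/2 + \log_{1/(1-\delta)} 2$, as required.

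The main obstacle I expect is not the recursion but the bookkeeping at the base case and in the probability bound. The base case needs a careful verification that a $k$-seed cluster really does propagate through the grid at constant rate using only strong ties, so that the base-case time scales with side length rather than area. And for the probability, I will need to check that summing the per-invocation failure $\ell^{2(1-\delta)}/n^{d}$ over all $O(\log\log n)$ levels and all sub-squares at each level still collapses, after absorbing a constant additive loss into $d$, into a global failure probability of at most $n^{-d}$.
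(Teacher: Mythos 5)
Your proposal takes essentially the same approach as the paper: the same recurrence $T(\ell) \leq k + 2T(\ell^{1-\delta})$, the same partition of $S$ into $\ell^\delta$ sub-squares of area $\ell^{1-\delta}$, the same base case via strong-tie propagation in $O(\sqrt{\ell})$ rounds, and the same recursion-tree accounting giving $2^{h}\cdot O(\log^{c/2} n)$ with $h = \log_{1/(1-\delta)}\log n$. The only place the paper is cleaner is the probability bookkeeping you flagged: rather than summing per-level failures and absorbing a constant loss into $d$, it builds the bound $\ell^2/n^d$ into the inductive definition of $T(\ell)$ and closes the induction by noting $\ell^\delta \cdot 2\,\ell^{2(1-\delta)}/n^d = 2\ell^{2-\delta}/n^d < \ell^2/n^d$ (using $\ell^\delta \geq 2$), so no slack in $d$ is ever needed.
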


\begin{lemma} \label{lem:recurrence-W-gen-k}  Fix constants $2 < \gamma< \alpha_k$, $0 < \delta < 1- \alpha_k \gamma$,  $\frac{1-\delta}{\left({k + 1 \choose 2} + 1 \right)(1 - \delta) + {k + 1 \choose 2}\gamma/2} \leq c$, and $0 < d$, then
 $K^{W}_{m,\gamma}(n)$  is  $(\delta, c, d, k)$-recursively spreading.  \end{lemma}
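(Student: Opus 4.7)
The plan is: after $A$ is fully infected, produce a new $k$-seed cluster inside $B$ that cascades to full infection within $k$ rounds. The key observation is that strong ties already wire together any $k$ grid-adjacent nodes sitting in a single $\lceil\sqrt{m}\rceil \times \lceil\sqrt{m}\rceil$ window, so a tight $k$-patch $\{u_1,\dots,u_k\}$ automatically forms a valid $k$-seed cluster (per the definition in Section~\ref{sec:model}). If additionally $u_i$ holds at least $k+1-i$ weak ties into $A$ for every $i = 1,\dots,k$, then a simple induction on $j$ shows $u_j$ is infected in round $j$: it sees $j-1$ already-infected strong-tie siblings $u_1,\dots,u_{j-1}$ plus $k+1-j$ infected weak-tie neighbors in $A$, totalling $k$ infected neighbors. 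I shall call such a patch \emph{good}, and it suffices to produce one good patch to verify Definition~\ref{def:rec-spreading}.

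Next, tile $B$ into $\Theta(\ell^{1-\delta})$ pairwise disjoint candidate patches; disjointness makes their goodness events mutually independent, because the weak ties of distinct nodes are independent. Since $|ua| = O(\sqrt{\ell})$ for every $u \in B$, $a \in A$, the marginal probability that a single weak tie of $u \in B$ lands in $A$ is
\[ p_A \;=\; \sum_{a \in A} \frac{\lambda}{|ua|^\gamma} \;=\; \Theta\!\left(\frac{|A|}{\ell^{\gamma/2}}\right) \;=\; \Theta\!\left(\ell^{1-\delta-\gamma/2}\right), \]
which is $o(1)$ since $\gamma > 2$. The without-replacement constraint perturbs this by only a multiplicative constant, because we sample $m = O(1)$ ties from a support of size $\Theta(n)$. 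Therefore a given candidate is good with probability $\Theta\bigl(p_A^{k}\cdot p_A^{k-1}\cdots p_A^{1}\bigr) = \Theta\bigl(p_A^{\binom{k+1}{2}}\bigr)$, and the expected number of good candidates is
\[ E \;=\; \Theta(\ell^{1-\delta}) \cdot \Theta\!\left(p_A^{\binom{k+1}{2}}\right) \;=\; \Theta\!\left(\ell^{(1-\delta)(\binom{k+1}{2}+1) - \binom{k+1}{2}\gamma/2}\right) \;=\; \Theta(\ell^q). \]
Unfolding $\alpha_k = 2(\binom{k+1}{2}+1)/\binom{k+1}{2}$ turns the hypothesis $\delta < 1-\gamma/\alpha_k$ into the statement $q > 0$, so $E$ grows polynomially in $\ell$.

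With disjoint candidates and independence, the probability that none is good is at most $(1-\Theta(p_A^{\binom{k+1}{2}}))^{\Omega(\ell^{1-\delta})} \leq \exp(-\Omega(\ell^q))$. The hypothesis $\ell^{1-\delta} > r \log^c n$ with $r = (6d/\lambda)^c$ and the prescribed value of $c$ then forces $\ell^q \geq d\log n + 2(1-\delta)\log \ell$, which makes the failure probability at most $\ell^{2(1-\delta)}/n^d$ as required by Definition~\ref{def:rec-spreading}. The main obstacle is the quantitative calibration: ensuring that the positivity of the exponent $q$ (the content of $\gamma < \alpha_k$) is preserved while the lower bound on $\ell$ is strong enough to make the exponential tail beat the target $\ell^{2(1-\delta)}/n^d$. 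Concretely, one must check that the stated $c$-inequality, combined with the definition $r = (6d/\lambda)^c$, translates exactly into $\ell^q$ dominating $d\log n$; the normalization $\lambda = \Theta(\gamma-2)$ enters only through $r$, and the without-replacement correction never disturbs the leading-order scaling $\ell^q$.
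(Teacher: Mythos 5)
Your proposal is correct and is essentially the paper's own argument: tile $B$ into disjoint $k$-patches, charge $u_i$ with $k+1-i$ weak ties into $A$ so that the patch cascades in $k$ rounds (the total of $\binom{k+1}{2}$ ties is exactly the paper's $P_4 \geq Q_k Q_{k-1}\cdots Q_1$ bound), and use independence across patches to get the $\exp(-\Omega(\ell^q))$ failure bound with $q = (1-\delta)\bigl(\binom{k+1}{2}+1\bigr) - \binom{k+1}{2}\gamma/2$, whose positivity is precisely $\gamma < \alpha_k$ and $\delta < 1 - \gamma/\alpha_k$.
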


\begin{proof}[Proof of Lemma~\ref{lem:upper-bound}]
Let $T(\ell)$ be the time it takes to infect all vertices on an $\sqrt{\ell} \times \sqrt{\ell}$ square in a $K^{*}_{m,\gamma}(n)$ graph starting from  a $k$-seed cluster  with probability $1- \ell^2/n^d$.  
Let $r = \left(\frac{6 d }{\lambda^3}\right)^{c}$ where $\lambda$ is the normalization factor (a constant) of the $K^{*}_{m,\gamma}(n)$ model.
We establish
the following recurrence relation for $T(\ell)$:
\begin{align}\label{eq:recursion}
T(\ell) \leq
\begin{cases}
k + 2T(\ell^{1-\delta}) & \text{if} \quad \ell >  (r \log^{c} n)^{(\frac{1}{1-\delta})}   \\
\sqrt{\ell}  & \text{o.w.}
\end{cases}
\end{align}

The second case of the recurrence follows immediately because a complex contagion using only strong ties can cover a cell of size $\sqrt{\ell}\times \sqrt{\ell}$ in $O(\sqrt{\ell})$ rounds. Next, we explain the first case in the recurrence.

Start with a square $S$ of $\ell > (r \log ^c n)^{(\frac{1}{1-\delta})}$ nodes.  Divide $S$ into $\ell^{\delta}$ many smaller squares, each with size $\ell^{1-\delta}$.
Fix a particular small square $A$ where the initial seeds locate.
By the definition of $T(\ell^{1-\delta})$, with probability at least $1- \ell^{2(1-\delta)}/n^d$, subsquare $A$ will be infected in time $T(\ell^{1-\delta})$.   By assumption of $(\delta, c, d, k)$-recursively spreading, after $k$ additional time steps, with probability at least $1- (\ell^\delta - 1) \ell^{2(1-\delta)}/n^d$ each of the other $(\ell^\delta - 1)$  sub-squares will have a $k$-seed cluster (applying the union bound).   And so with probability at least $1- (\ell^\delta - 1)\ell^{2(1-\delta)}/n^d$ they will all be infected in time $T(\ell^{1-\delta})$.  At this point every node of $S$ is infected.  By a union bound, the probability of failure is at most
$$\ell^\delta \left(2\frac{ \ell^{2(1-\delta)}}{n^d}\right) = 2\frac{\ell^{2-\delta}}{n^d} <\frac{\ell^2}{n^d} $$
The last inequality follows because $2 \leq \ell^{\delta}$.

We use the recurrence to upper bound $T(\ell)$. The height $h$ of the recurrence tree is such that
$$\left(r \log^c(n) \right)^{(\frac{1}{1-\delta})^h} > n.$$  Rearranging, we can see that $h \geq \frac{\log\log n - \log c \log\log(r \log n)}{-log(1-\delta)}$ and taking $h = \frac{\log\log n}{-\log(1-\delta)}  =   \log_{1/1-\delta}\log n \approx (1/\delta)\loglog n $ suffices.
The cost of the recurrence is dominated by the sum of the running times of the leaf cases. The branching factor of the tree is 2 and thus the number of leaves is at most  $2^{h}=2^{\log_{1/1-\delta}\log n}=(\log n)^{\log_{1/1-\delta} 2}$.
Therefore,
$$
T(n)=O \left( \sqrt{\log^{c} n} \right ) \cdot O \left (\left (\log n \right )^{ \log_{\frac{1}{1-\delta}} 2} \right )
=O \left (\left (\log n \right )^{\frac{c}{2}+ \log_{\frac{1}{1-\delta}} 2} \right ).
$$
\end{proof}
We present the proof of Lemma~\ref{lem:recurrence-W-gen-k} for $k=2$ here, because the presentation is easier and gives more insight into the nature of the lemma. The essence of the argument for $k \geq 3$ is the same and its proof is presented in Appendix~\ref{app:upperbound-lemmas}.
\begin{proof}[Proof of Lemma~\ref{lem:recurrence-W-gen-k} when $k=2$]
 When $k=2$, we take $2 < \gamma<8/3$, $0 < \delta < \frac{8- 3 \gamma}{8}$,  $\frac{2-2 \delta}{8-3\gamma - 8 \delta} \leq c$, and $0 < d$. Then we argue that $K^{W}_{2,\gamma}(n)$  is  $(\delta, c, d, 2)$-recursively spreading.

Fix  $\ell > (r \log ^c n)^{(\frac{1}{1-\delta})}$, and $\ell$-sized square $S$ and two disjoint $\ell^{(1-\delta)}$-sized subsquares $A$ and $B$, and partition $B$ into $\ell^{(1-\delta)}$/2 disjoint pairs of (grid) neighbor nodes  $u, v$. Assume that $a \in A$. Let $d(u,a)$ be the Manhattan distance of $u$ and $a$.
 \begin{enumerate}[i)] \denselist
 \item $P_{1}(\ell)=\Prob{\text{$u$ has a weak tie to $a \in A$, via a particular edge}}$
 \item $P_{2}(\ell)=\Prob{\text{$u$ has a weak tie to $A$, via a particular edge}} \geq |A| \times P_{1}(\ell)$
 \item $P_{3}(\ell)=\Prob{\text{$u$ is connected to $A$, via $2$ distinct weak ties}} $
 \item $P_{4}(\ell)=\Prob{\text{$u,v$ form a \emph{new seed} in $B$}} $
  \item$P_{5}(\ell)=\Prob{\text{a new seed forms in $B$}} $
 \end{enumerate}
 \begin{align*}
P_{1}(\ell) & \geq  \frac{\lambda}{(d(u,a))^{\gamma}}  =   \frac{\lambda}{ \sqrt{\ell^{\gamma}} } ; \\
  P_{2}(\ell) &\geq |A| \times P_{1}(\ell)  = \lambda \ell^{1-\delta-\gamma/2} ;\\
 P_3(\ell) & \geq |\{ (a,b) | a, b \in A \}| \times \Prob{\text{$u$ has a weak tie to $a$ and a weak tie to $b$}}  \\
 &\geq  {|A| \choose 2} P_{1}(\ell)^{2}  \geq \frac{ \lambda^2 \ell^{2-2\delta-\gamma}}{3};\\
 P_{4}(\ell) & \geq P_{2}(\ell)\times P_{3}(\ell)  = \frac{ \lambda^3  \ell^{3-3\delta-3\gamma/2} }{3} ;\\
P_{5}(\ell)  &\geq 1 - (1 - P_{4}(\ell))^{|B|/2} \geq  1 - \exp\{P_{4}(\ell) |B|/2\} = 1 - \exp\{-\frac{\lambda^3}{6}\ell^{4-4\delta -3\gamma/2}\}
\end{align*}
We want $P_5(\ell)$ to be very close to $1$ and that means that we want the power of $\ell >0$, which
gives us:
$$
4-4\delta -3\gamma/2 >0  \ \Rightarrow \ \gamma<8/3,  \ \delta< \frac{8-3\gamma}{8}.
$$
Also, note that $P_5$ is increasing in $\ell$. Therefore, the smallest probability happens when
then size of $\ell$ is the smallest; that is $\ell^{1-\delta} =r \log^c n$ and $\ell = (r \log^c n)^{1/(1-\delta)}$.
We want $1-P_{5}\left((r \log^c n)^{1/(1-\delta)} \right)$ to be polynomially small even when $\ell$
takes on its smallest value.
This requires that the power of $\log n$ be $ \geq 1$:
$$
4c -\frac{3c\gamma}{2-2\delta} \geq 1  \  \Rightarrow \ c \geq  \frac{2-2 \delta}{8-3\gamma - 8 \delta};
$$
and by replacing $r=(6d/\lambda^3)^c$, we get that
$$
P_{5}(\ell)  \geq 1 - \exp\{-\frac{\lambda^3}{6} \left(  (r \log^c n)^{1/(1-\delta)} \right)^{4-4\delta -3\gamma/2}\} \geq 1-\exp\{-d \log n\} \geq 1-n^{-d}.
$$
Finally, notice that this probability is much stronger than the statement of Lemma.
\end{proof}

Now, we prove a polylogarithmic upper bound for the speed of a $k$-complex contagion on $K^I_{k,\gamma}(n)$ when $2 < \gamma<\beta_k$ where $\beta_k = \frac{2(k+1)}{k}$. We note that the subtle difference in how the networks are defined, makes a rather large difference in how complex-cascades spread.

Consider the case of $k=2$. 
The main difference with Theorem~\ref{thm:upperbound-W-gen-k} is that new configurations of nodes/edges
can act as \emph{new seeds} in $K^I_{2,\gamma}(n)$ compared to $K^W_{2,\gamma}(n)$. In $K^W_{2,\gamma}(n)$,
a new seed appears in the graph if two grid neighbor nodes $u,v$ have a total of $3$ randomly created edges (weak ties) to the infected parts of the network. One of them, say $u$
becomes infected first using its two weak ties and $v$ becomes infected using one weak tie and one grid edge.
However, in $K^I_{2,\gamma}(n)$, if a node $u'$ has by itself $2$ randomly created edges (weak ties) to the infected part, and a grid neighbor of $u'$, called $v'$, has one randomly created edge to $u'$ besides the grid edge to $u'$, then $u',v'$ can constitute a new seed together.

\begin{theorem} \label{thm:upperbound-I-gen-k}
Let $ k \leq m=O(1)$. Consider a Kleinberg's small world model $K^{I}_{m,\gamma}(n)$ of $n$ vertices where $2 < \gamma< \beta_k$.
Let $0 < \delta < 1- \frac{\gamma}{\beta_k}$,  $\frac{1- \delta}{(k + 1)(1 -\delta) - k \gamma/2} \leq c$, and $0 < d$, and $\alpha = {\frac{c}{2}+ \log_{\frac{1}{1-\delta}} 2}$ be constants.
If we start a $k$-complex contagion from a $k$-seed cluster $\mathcal{I}$, it takes at most $O \left ( \log ^{\alpha} n  \right )$ rounds for the contagion to spread to the whole network with probability at least $1-n^{-d}$.
\end{theorem}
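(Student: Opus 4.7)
The plan is to reuse the recursive framework already established for Theorem~\ref{thm:upperbound-W-gen-k}: I would prove the analog of Lemma~\ref{lem:recurrence-W-gen-k} for the with-replacement model, namely that for the stated ranges of $\delta, c, d$, the graph $K^I_{m,\gamma}(n)$ is $(\delta, c, d, k)$-recursively spreading in the sense of Definition~\ref{def:rec-spreading}, and then invoke Lemma~\ref{lem:upper-bound} to obtain the $O(\log^\alpha n)$ bound. The recursion $T(\ell) \leq k + 2T(\ell^{1-\delta})$ and the subsequent leaf-counting argument are insensitive to whether we are in $K^W$ or $K^I$, so the only real work is re-deriving the ``new seed in $B$'' event, which is where multi-edges shift the arithmetic in our favor.

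Fix an $\ell$-sized square $S$ containing disjoint $\ell^{1-\delta}$-sized subsquares $A, B$, and assume $A$ is fully infected. The key new mechanism available in $K^I$ is that a \emph{single} node $u_1 \in B$ can become infected in one round by placing all $k$ of its weak ties into $A$: because the $k$ ties are drawn independently with replacement and repeated exposures count separately, $u_1$ accumulates $k$ infected ``neighbors'' and is infected. The probability of this event is at least $P_2(\ell)^k$, where $P_2(\ell) = \Omega(\lambda\,\ell^{1-\delta-\gamma/2})$ is the per-weak-tie probability of landing in $A$. I would then propagate the infection along a fixed grid row $u_1, u_2, \dots, u_k \subset B$ one round at a time: once $u_{i-1}$ is infected, node $u_i$ needs only $k-1$ of its own weak ties to multi-edge onto the grid-adjacent $u_{i-1}$, an event of constant probability $\Omega(\lambda^{k-1})$; combined with the strong tie to $u_{i-1}$ this gives $u_i$ the required $k$ exposures and so $u_i$ is infected in the next round, producing a fully infected $k$-seed cluster $\{u_1,\dots,u_k\}$ within $k$ rounds.

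The probability that a fixed row of $k$ consecutive grid vertices in $B$ realizes such a chain is therefore $\Omega(\ell^{k(1-\delta-\gamma/2)})$, the $(k-1)^2$ factors of $\lambda$ being absorbed into the constant. Since $B$ contains $\Omega(\ell^{1-\delta})$ disjoint such rows, a standard $1-e^{-x}$ calculation bounds the probability that no new seed cluster appears by
\[
\exp\bigl(-\Omega(\ell^{(k+1)(1-\delta)-k\gamma/2})\bigr).
\]
The hypothesis $\gamma < \beta_k$ combined with $\delta < 1 - \gamma/\beta_k$ makes the exponent of $\ell$ strictly positive, while $c \geq \frac{1-\delta}{(k+1)(1-\delta)-k\gamma/2}$ ensures that when $\ell$ is as small as $(r\log^c n)^{1/(1-\delta)}$ the exponent of $\log n$ in the failure bound is at least $1$, rendering the failure probability polynomially small in $n$ as Definition~\ref{def:rec-spreading} demands. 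Lemma~\ref{lem:upper-bound} then closes the argument.

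The main technical obstacle is not a deep one but a matter of bookkeeping: one has to verify that the sequential infection chain $u_1 \to u_2 \to \cdots \to u_k$ fits within the ``new seed cluster infected in $\leq k$ rounds'' clause of Definition~\ref{def:rec-spreading}, and that independence both across the $m$ weak ties of each vertex and across the disjoint rows in $B$ lets the multi-edge probabilities multiply cleanly. Because the random graph is drawn once up front and the dynamics are deterministic thereafter, all the conditioning is really on the graph, so independence is clean and the calculation proceeds as above. Compared to the $K^W$ case, one factor of $P_2(\ell)$ in the per-row probability has been replaced by a constant $\lambda^{k-1}$, which is precisely why the fast-spreading range enlarges from $(2, \alpha_k)$ to $(2, \beta_k)$.
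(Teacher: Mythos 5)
Your proposal is correct and follows essentially the same route as the paper: reduce to proving that $K^I_{m,\gamma}(n)$ is $(\delta,c,d,k)$-recursively spreading (the analogue of Lemma~\ref{lem:recurrence-W-gen-k}), then invoke Lemma~\ref{lem:upper-bound}. The only cosmetic difference is that you form the new seed cluster by a sequential chain $u_1 \to u_2 \to \cdots \to u_k$ with each $u_i$ placing $k-1$ multi-edges onto $u_{i-1}$ plus its strong tie, whereas the paper's appendix has $u_2,\dots,u_k$ all placing $k$ weak ties onto $u_1$ in a star; both yield a constant factor per auxiliary node and the identical $\Omega\bigl(\ell^{k(1-\delta-\gamma/2)}\bigr)$ per-group probability, and both infect the cluster within $k$ rounds as Definition~\ref{def:rec-spreading} requires.
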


The proof of Theorem~\ref{thm:upperbound-I-gen-k} follows immediately from Lemma~\ref{lem:upper-bound} and  Lemma~\ref{lem:recurrence-I-gen-k} (proved in Appendix~\ref{app:upperbound-lemmas}).  The difference is that the bounds on values of $\gamma, \delta, c, r$ in Lemma~\ref{lem:recurrence-I-gen-k} vary from those in Lemma~\ref{lem:recurrence-W-gen-k}. 

\begin{lemma} \label{lem:recurrence-I-gen-k}  Fix constants $2 < \gamma< \beta_k$, $0 <  \delta < 1- \frac{\gamma}{\beta_k}$,  $\frac{1- \delta}{(k + 1)(1 -\delta) - k \gamma/2} \leq c$, and $0 < d$, then
 $K^{I}_{m,\gamma}(n)$  is  $(\delta, c, d, k)$-recursively spreading.  \end{lemma}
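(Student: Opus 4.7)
The plan is to mirror the structure of the proof of Lemma~\ref{lem:recurrence-W-gen-k} for $k=2$ given in the main text, exploiting two features special to the $K^I$ model: the $m$ weak ties of each node are drawn independently (with replacement), so single-edge probabilities multiply freely; and multi-edges count with multiplicity toward the $k$-threshold, so a single grid-adjacent pair $(u,v)$ can create a new $k$-seed from strictly fewer successful random edges than in $K^W$.

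Fix an $\ell$-sized square $S$, disjoint $\ell^{1-\delta}$-sized subsquares $A,B \subseteq S$ with $A$ fully infected, and partition $B$ into $|B|/2 = \ell^{1-\delta}/2$ grid-adjacent pairs $(u,v)$. For each pair I consider the following witness event that makes $(u,v)$ a new seed within two rounds: $u$ has at least $k$ of its $m$ weak ties landing inside $A$, so $u$ is infected in round $1$; and $v$ has at least $k-1$ of its $m$ weak ties landing at $u$, so that together with the strong grid edge to $u$ the node $v$ accumulates $k$ edges to an infected neighbor and is infected in round $2$. Since every point of $A$ lies at grid distance $O(\sqrt{\ell})$ from $u$, each weak tie of $u$ lands in $A$ with probability at least $P_2 := |A|\lambda/\ell^{\gamma/2} = \lambda\,\ell^{1-\delta-\gamma/2}$, and each weak tie of $v$ lands at $u$ with probability at least $\lambda$ (since $\mathrm{dist}(u,v)=1$). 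Independence of the $m$ weak ties of a single node, which is the defining feature of $K^I$, then gives
\begin{align*}
P_u &\;\geq\; \binom{m}{k}\,P_2^{k}(1-P_2)^{m-k} \;=\; \Theta\!\bigl(\lambda^{k}\,\ell^{k(1-\delta)-k\gamma/2}\bigr),\\
P_v &\;\geq\; \binom{m}{k-1}\,\lambda^{k-1}(1-\lambda)^{m-k+1} \;=\; \Theta\bigl(\lambda^{k-1}\bigr),\\
P_{\mathrm{seed}} &\;\geq\; P_u\,P_v \;=\; \Theta\!\bigl(\lambda^{2k-1}\,\ell^{k(1-\delta)-k\gamma/2}\bigr).
\end{align*}
Different pairs involve disjoint random-edge sets, so their witness events are independent, and the probability that no pair becomes a new seed is at most $\exp\!\bigl(-\tfrac12 P_{\mathrm{seed}}|B|\bigr) = \exp\!\bigl(-\Theta(\lambda^{2k-1}\,\ell^{E})\bigr)$, where $E := (k+1)(1-\delta) - k\gamma/2$.

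Two constraints on $E$ close the proof. First, to force $\ell^{E}\to\infty$ we need $E>0$, equivalent to $\delta < 1-\gamma/\beta_k$, which is precisely the hypothesis. Second, at the recursion base $\ell = (r\log^{c} n)^{1/(1-\delta)}$ we have $\ell^{E} = (r\log^{c} n)^{E/(1-\delta)}$, and we want this to dominate $d\log n$ so that the failure probability is at most $n^{-d}$ (and hence, since $\ell\leq n$, at most $\ell^{2(1-\delta)}/n^{d}$ as Definition~\ref{def:rec-spreading} demands). The exponent of $\log n$ in $(\log n)^{cE/(1-\delta)}$ is at least $1$ exactly when $c \geq (1-\delta)/E = (1-\delta)/[(k+1)(1-\delta)-k\gamma/2]$, which again is the hypothesis; the residual multiplicative constant is absorbed by taking $r$ sufficiently large in terms of $\lambda$ and $k$. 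The one subtlety to get right is the factor $P_v$: in the $K^W$ analysis the analogous term is $\Theta(\ell^{(k-1)(1-\delta-\gamma/2)})$ because $v$'s extra infected endpoints must be distinct from $u$ and hence sit in the far-away square $A$; multi-edges in $K^I$ let this factor collapse to the $\ell$-independent constant $\Theta(\lambda^{k-1})$, and it is exactly this collapse of one $k-1$ power that shifts the spreading threshold from $\alpha_k$ to $\beta_k$.
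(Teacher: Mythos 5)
Your approach matches the paper's in its essential insight: the defining advantage of $K^I$ is that once a node $u$ in a grid-adjacent cluster acquires $k$ independent weak ties into $A$, the remaining cluster members can be infected locally via redundant multi-edges to $u$, with probability $\Theta(1)$ \emph{independent} of $\ell$; collapsing the secondary infection cost from an $\ell$-dependent power to a constant is exactly what moves the threshold from $\alpha_k$ to $\beta_k$. Your computation of $P_u$, the independence argument, the exponent $E=(k+1)(1-\delta)-k\gamma/2$, and the derivation of the constraints on $\delta$ and $c$ all agree with the paper's appendix proof. Your choice of $k-1$ weak ties from $v$ to $u$ plus the grid edge is in fact marginally tighter than the paper, which charges $k$ weak ties from each $u_s$ to $u_1$; since both factors are $\ell$-independent constants, this difference is immaterial.

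However, as written your witness event has a real gap for $k>2$. You partition $B$ into grid-adjacent \emph{pairs} $(u,v)$ and argue that both become infected within two rounds, then call this a ``new seed.'' But Definition~\ref{def:rec-spreading} requires a new \emph{$k$-seed cluster}: $k$ nodes, grid-connected, all infected, so that the recursion $T(\ell)\le k+2T(\ell^{1-\delta})$ can be invoked at the next level. For $k=2$ the pair is a $2$-seed cluster and your argument is complete; for $k\ge 3$, two infected vertices are not a $k$-seed cluster, and their strong ties alone cannot propagate a $k$-complex contagion further, so the recursion stalls. The paper avoids this by partitioning $B$ into grid-connected $k$-tuples $(u_1,\dots,u_k)$ and demanding $u_1$ get $k$ weak ties into $A$ while each $u_s$, $s\ge 2$, gets $k$ redundant weak ties to $u_1$. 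The corresponding fix to your argument is to take $k$-tuples $(u,v_1,\dots,v_{k-1})$ of grid neighbors of $u$ and require each $v_i$ to have at least $k-1$ weak ties to $u$; the seed probability then carries a factor $P_v^{k-1}=\Theta(\lambda^{(k-1)^2})$ rather than $P_v$, but this is still a constant in $\ell$, so $E$, and hence the hypotheses on $\delta$ and $c$, are untouched. With that change, your argument is a correct, essentially identical, and slightly sharper variant of the paper's proof.
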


\subsection{When $\gamma = 2$}

When $\gamma = 2$, an upper bound of $O(\log^{3.5} n)$ was proven in~\cite{Ghasemiesfeh:2013:CCW} for a $2$-complex contagion on the Kleinberg's small world model $K^W_{m, 2}(n)$. This bound can be slightly improved since we observe that for a new seed pair to be generated, we only need a total of three edges to the infected nodes, rather than four edges assumed in~\cite{Ghasemiesfeh:2013:CCW}. Thus, the upper bound can be improved to be $O(\log^{3} n)$.

In~\cite{Ghasemiesfeh:2013:CCW} the result has also been extended to $k$-complex contagions ($k >2 $) on $K^W_{m, 2}(n)$ and an upper bound of $O(\log^{k^2/2+1.5})$ can be obtained (there was a small typo in~\cite{Ghasemiesfeh:2013:CCW}).
Again, we can slightly improve this upper bound: in order to get a new $k$-seed cluster $(u_{1}, ..., u_{k})$, we do not need all $u_{i}$'s, $1\leq i\leq k$, to have $k$ edges to the infected nodes. Since $(u_{1}, ..., u_{k})$ is a clique, it is sufficient for $u_{1}$ to have $k$ random edges to the infected nodes, $u_{2}$ to have $k-1$ random edges to the infected nodes, etc. Thus, the total number of  random edges needed from $(u_{1}, ..., u_{k})$ to the infected nodes would be equal to $k+(k-1)+...+1=k(k+1)/2$. The idea of the proof works for obtaining an upper bound on the speed of a $k$-complex contagion on the other Kleinberg's small world model variation $K^I_{m, 2}(n)$ as well. Thus, we get the following theorem:

\begin{theorem} \label{thm:gamma-2}
Consider a Kleinberg's small world model of $n$ vertices where $ \gamma = 2$, $K^*_{m, 2}(n)$,  (e.g.\ $K^I_{m, 2}(n)$ or $K^W_{m, 2}(n)$).
If we start a $k$-complex contagion from a $k$-seed cluster, it takes at most $O \left ( \log^{k(k+1)/4+1.5} n  \right )$ rounds for the contagion to spread to the whole network with probability $1-O(1/n)$.
\end{theorem}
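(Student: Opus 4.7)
The plan is to adapt the approach of~\cite{Ghasemiesfeh:2013:CCW} to general $k$ and apply the improved counting mentioned just before the theorem. Since $m\geq k$, we can realize the $k$-seed cluster as a grid-clique $(u_1,\ldots,u_k)$, all pairwise within Manhattan distance $\lceil\sqrt{m}\rceil$. Infecting the members sequentially, node $u_i$ already has $u_1,\ldots,u_{i-1}$ as infected grid-neighbors, so it needs only $k-i+1$ additional weak ties to the previously infected region $I$ in order to become infected. The total number of weak ties required to spawn a new grid-clique $k$-seed cluster is therefore $\sum_{i=1}^{k}(k-i+1)=k(k+1)/2$, replacing the $k^2$ bound used in~\cite{Ghasemiesfeh:2013:CCW}.

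At $\gamma=2$ the normalizer is $\lambda=\Theta(1/\log n)$. If $I$ is a roughly filled disk of radius $r$ and $u$ lies at distance $\Theta(r)$ from $I$, then $\sum_{v\in I}|uv|^{-2}=\Theta(1)$, so any single weak tie of $u$ lands in $I$ with probability $\Theta(1/\log n)$. Combining with the edge count above, a specific grid-clique candidate becomes a new $k$-seed cluster with probability $\Theta(1/\log^{k(k+1)/2} n)$. This bound is the same in both $K^W_{m,2}(n)$ and $K^I_{m,2}(n)$, since it is insensitive to whether weak ties may coincide.

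I would then run the contagion in two stages. Fix $R := C\log^{k(k+1)/4+1/2} n$ for a sufficiently large constant $C$. In a bootstrap stage, the initial $k$-seed cluster grows via strong ties alone into a filled square of side $R$, taking $O(R)$ rounds. In each subsequent doubling stage, given that $I$ has radius $r\geq R$, partition the annulus $\{v:r<\mathrm{dist}(v,I)\leq 2r\}$ into sub-cells of side $R$. The expected number of new $k$-seed grid-cliques per sub-cell is $\Theta(R^2/\log^{k(k+1)/2} n)=\Theta(\log n)$, so a Chernoff bound yields that each sub-cell contains at least one new cluster with probability $1-n^{-\Omega(1)}$; strong-tie spreading then fills the sub-cell in $O(R)$ further rounds. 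Each doubling stage thus costs $O(R)$ rounds, and $O(\log n)$ stages suffice to cover the torus, giving total time $O(R\log n)=O(\log^{k(k+1)/4+1.5} n)$. A union bound over the $O((n/R^2)\log n)$ sub-cells across all stages then yields the $1-O(1/n)$ success probability once $C$ is taken large enough.

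The step I expect to be the main obstacle is the uniform lower bound $\sum_{v\in I}|uv|^{-2}=\Omega(1)$: it has to hold across all the geometries that arise in practice (non-exactly-circular $I$, sub-cells abutting the annulus boundary, torus wrap-around, and the uneven ``filling'' of $I$ left by the preceding stages), with constants good enough that the $\Theta(\log n)$ calibration of the expected cluster count per sub-cell survives into the Chernoff step. Careful tracking of these constants will ultimately fix the required $C$ and hence the stated $1-O(1/n)$ success probability after the final union bound.
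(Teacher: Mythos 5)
Your proposal is correct and follows the same route the paper takes: the key observation---that a new $k$-seed cluster $(u_1,\ldots,u_k)$ needs only $k+(k-1)+\cdots+1 = k(k+1)/2$ weak ties into the infected region, since $u_i$ can reuse the already-infected grid neighbors $u_1,\ldots,u_{i-1}$---is exactly the improvement the paper invokes over~\cite{Ghasemiesfeh:2013:CCW}, and your ball-doubling scheme with cells of side $R = C\log^{k(k+1)/4+1/2} n$ reproduces the structure of the $\gamma=2$ argument from that prior work which the paper inherits. The paper itself only sketches this step, so your filled-in account (the $\Theta(1/\log n)$ per-tie hit probability coming from $\lambda=\Theta(1/\log n)$ and $\sum_{v\in I}|uv|^{-2}=\Theta(1)$ at annulus distance, the resulting $\Theta(\log n)$ expected new clusters per cell, and the Chernoff-plus-union-bound finish) is a faithful reconstruction with no gaps.
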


\subsection{When $\gamma = \alpha_k, \beta_k$}  The only points of parameter $\gamma$ that we are not able to classify are $\alpha_k$ in the $K^W_{m, \gamma}(n)$  model and $\beta_k$ in the $K^I_{m, \gamma}(n)$ model. We note that in both Theorems~\ref{thm:upperbound-W-gen-k} and~\ref{thm:upperbound-I-gen-k}, the exponent of the logarithm in the time to infection, grows toward infinity.  We conjecture that this is necessary and that the cascade at the critical points of $\alpha_k$ and $\beta_k$ is not polylogarithmic, though we cannot prove this.  This is in contrast to $\gamma = 2$ where a transition also occurs.


\section{Lower Bounds}\label{sec:lowerbound-short}
In this section, we first describe our three lower bound theorems and the general idea behind each of them. 
Then, we present the detailed proofs.

What enabled us to prove the polylogarithmic upper bounds
in Section~\ref{sec:upperbound}, was the abundance of weak ties in all the distance ranges: from short weak ties of constant length to long weak ties of length $\Omega(\sqrt{n})$.
Such an abundance ceases to exist when $\gamma$ is outside of $[2,\alpha_k)$ for $K^{W}_{m,\gamma}(n)$ and outside of $[2, \beta_k)$ for $K^{I}_{m,\gamma}(n)$. When $\gamma \in [0,2)$ weak ties become too random. In Subsection~\ref{sec:lowerbound-random}, we show that too much randomness causes the complex contagion to be considerably slower. The intuition is that there is lack of coherence to enable complex contagion to generate \emph{new seeds} until the contagion has grown to a polynomially large portion of the graph. The following theorem works for both graph models and for all possible values of $k$.

\begin{theorem}\label{thm:small-gamma}
Consider the  Kleinberg's small world model $K^*_{m,\gamma}(n)$ of $n$ vertices where $0\leq\gamma<2$. If we set $\mathcal{I}$ to be a $k$-seed cluster, then the speed of a $k$-complex contagion ${\rm CC}(K^*_{m,\gamma}(n), k, \mathcal{I})$ is at least $n^{\delta}$, $\delta <\min \left( \frac{k-2+\eps/2}{2k},\frac{k-\gamma/2-1+\eps}{2+2k-k\gamma} \right) $, with probability
at least $1-{\rm O} (n^{1-\eps})$, where $m\geq k$ is a constant and $0<\eps<1$.
\end{theorem}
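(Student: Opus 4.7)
The plan is to show that for $t \leq n^\delta$, the contagion grows essentially only through strong-tie boundary expansion on the underlying grid, so $|I_t|=O(t^2)$ and the infection cannot cover the entire network within $n^\delta$ rounds. The argument is a union bound over rounds and over potential new seed locations, combined with the observation that when $\gamma<2$ a single weak tie lands in any relatively small set with low probability.

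Concretely, I would introduce the one-tie hitting probability $p^\star(s) := C\,n^{\gamma/2-1}\,s^{1-\gamma/2}$, which upper-bounds the chance that a single weak tie emanating from any node $v$ lands in a set $S$ of size $s$. For $\gamma<2$ this is obtained by bounding $\sum_{u\in S}\lambda/|vu|^\gamma$ via the worst-case arrangement of $S$ as a disk centered at $v$, using $\lambda=\Theta(n^{\gamma/2-1})$. A node $v\notin I_t$ with $j$ strong-tie neighbors in $I_t$ then becomes infected with probability at most $\binom{m}{k-j}(p^\star(|I_t|))^{k-j}$. Decomposing by $j$: for $j\in\{1,\dots,k-1\}$ the number of such $v$ is bounded by the perimeter $O(\sqrt{|I_t|})$; for $j=0$ we use the trivial bound $n$; for $j\geq k$ we get ordinary strong-tie boundary growth, also $O(\sqrt{|I_t|})$. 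The expected per-round increment of $|I_t|$ is therefore
\[
O\bigl(\sqrt{|I_t|}\bigr) + O\bigl(\sqrt{|I_t|}\cdot p^\star(|I_t|)^{k-1}\bigr) + O\bigl(n\cdot p^\star(|I_t|)^k\bigr).
\]
Chernoff applied round-by-round (using independence of weak ties across nodes in $K^I_{m,\gamma}(n)$, and negative association in $K^W_{m,\gamma}(n)$), together with a union bound over the $n^\delta$ rounds, upgrades this expectation bound into the required high-probability statement.

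Plugging the inductive hypothesis $|I_t|\leq O(t^2)$ into the two weak-tie terms and requiring their sum over $n^\delta$ rounds to remain $o(1)$ yields the two constraints on $\delta$: the near-boundary term $\sqrt{|I_t|}\cdot p^\star(|I_t|)^{k-1}$ produces $\delta<(k-2+\eps/2)/(2k)$, while the pure-weak-tie term $n\cdot p^\star(|I_t|)^k$ produces $\delta<(k-\gamma/2-1+\eps)/(2+2k-k\gamma)$. Taking the minimum recovers the theorem. The $\eps$ slack in each numerator is absorbed into the union bound over rounds and over the $O(n)$ candidate seed locations, producing the claimed failure probability $O(n^{1-\eps})$.

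The principal obstacle is propagating the inductive hypothesis $|I_t|=O(t^2)$ through the argument: once even a single weak-tie jump succeeds, the resulting isolated infected cluster could inflate the perimeter beyond $O(\sqrt{|I_t|})$ and accelerate subsequent growth. I would close this gap via a bootstrap: the expectation bound above itself certifies that no weak-tie jump succeeds within $n^\delta$ rounds except with probability $O(n^{1-\eps})$, and conditioned on this complement, $I_t$ is merely the strong-tie expansion of the initial $k$-seed cluster --- a single, roughly disk-shaped region whose perimeter is automatically $O(\sqrt{|I_t|})$. A secondary point to settle is that the $K^W$ and $K^I$ analyses coincide: draws without replacement induce negative association across the $m$ ties of a single node, which only decreases the probabilities of the ``bad'' events being union-bounded, so any bound proven for $K^I_{m,\gamma}(n)$ transfers directly to $K^W_{m,\gamma}(n)$.
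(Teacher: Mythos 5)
Your high-level strategy is the same as the paper's---show that when $\gamma<2$, every weak-tie contribution to the cascade is too unlikely, so the contagion is confined to strong-tie boundary expansion, giving $\Omega(n^\delta)$ rounds---but the way you decompose and bound the bad events is wrong in a way that would make the argument fail, and the paper's version is structured to avoid exactly the pitfall you fall into.

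The central error is in the ``near-boundary'' term. You sum over $j\in\{1,\dots,k-1\}$ strong-tie-infected neighbors and write the contribution as $O\bigl(\sqrt{|I_t|}\cdot p^\star(|I_t|)^{k-1}\bigr)$. That exponent $k-1$ corresponds to $j=1$; but the union over $j$ is dominated by $j=k-1$, which needs only a \emph{single} weak tie and has probability $\Theta\bigl(p^\star(|I_t|)\bigr)$. If you include that term honestly you get $O\bigl(\sqrt{|I_t|}\cdot p^\star(|I_t|)\bigr)$ per round, and the resulting constraint on $\delta$ can be vacuous (try $k=3$, $\gamma=1$: the term forces $\delta(4-\gamma)\le\eps-\gamma/2$, which has no admissible $\delta>0$ once $\eps<\gamma/2$). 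The resolution is that the $j\ge1$ events should not be in the bad-event union bound at all: any node with even one strong-tie neighbor in $I_t$ lies within grid distance $\lceil\sqrt{m}\rceil$ of $I_t$, so infecting it---by whatever combination of ties---only moves the boundary outward by a constant, and the diameter of $I_t$ still grows by $O(1)$ per round. Only the $j=0$ case (a node receiving all $k$ of its needed infected neighbors via weak ties) can seed the contagion at a \emph{non-local} location; this is exactly what the paper isolates under the name ``wide bridge'' (a node with at least $k$ weak ties to $D$), and it is the only event that is, or needs to be, union-bounded away. Your proposal conflates these cases: you worry that ``once even a single weak-tie jump succeeds, the resulting isolated infected cluster could inflate the perimeter,'' but a $j\ge1$ infection is never isolated, and a $j=0$ infection needs $k$ weak ties.

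A second, smaller structural difference: because the random graph is fixed at time zero, one should not sum failure probabilities over rounds. The paper instead fixes the target disk $D$ of radius $n^\delta$ in advance, observes that (absent wide bridges) $I_t\subseteq D$ for all $t\le n^\delta/\lceil\sqrt m\rceil$, and then computes \emph{once} the expected number of nodes with $k$ weak ties into $D$, splitting by whether the node lies in the annulus $A(2n^\delta,\sqrt n)$ ($Z_1$) or in $B(2n^\delta)$ ($Z_2$). The two constraints on $\delta$ in the theorem do not come from a near-boundary/far split; they both come from $Z_1$, and they differ only according to whether the integral $\int_{2n^\delta}^{\sqrt n} y^{1-k\gamma}\,dy$ is dominated by its upper or lower limit, i.e.\ whether $\gamma\lessgtr 2/k$. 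Your attribution of the first constraint to a near-boundary term is therefore also mistaken---that constraint arises from far nodes in the annulus when $\gamma$ is small, and indeed in your derivation the near-boundary constraint only matches the paper's first constraint at $\gamma=0$.

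Finally, your appeal to negative association for $K^W$ is heavier machinery than needed and not fully justified; the paper uses the elementary observation that conditioned on any realization of the first $j-1$ ties, the $j$th tie still has probability $O(Q_1)$ of landing in $D$, which directly gives the $O(Q_1^j)$ bound for both models.
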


When $\gamma >2$, there is a more subtle situation. As $\gamma$ becomes big, weak ties become shorter. When the weak ties become too \emph{short}, one might suspect that the complex contagion might take a long time to travel \emph{long} distances in the graph. Our first attempt in formalizing this intuition appears in Theorem~\ref{thm:lowerbound-too-short} below.  The theorem applies to $k$-complex contagions on both variations of the Kleinberg's model (with and without multi-edges) for $\gamma \in (\beta_k, \infty)$, where $\beta_k = \frac{2(k+1)}{k}$.  The proof divides the graph into a polynomial number of blocks (of polynomial size) and shows that the $k$-complex contagion will travel only through adjacent blocks.  We show that if adjacent blocks of a block, like $B$, are not already infected, with high probability no node inside that block, block $B$, gets infected.

\begin{theorem} \label{thm:lowerbound-too-short}
Consider a $k$-complex contagion ${\rm CC}(K^*_{m,\gamma}(n), k, \mathcal{I})$ on Kleinberg's small world network $K^*_{m,\gamma}(n)$, and let $\mathcal{I}$ be a $k$-seed cluster.  For every $k$ and for every $\gamma > \beta_k = \frac{2(k+1)}{k}$, there exists $\delta, \beta > 0$ such that for any constant $m > k$, with probability $1 - 1/n^{\beta}$, it takes $\Omega(n^ \delta)$ rounds for ${\rm CC}(K^*_{m,\gamma}(n), k, \mathcal{I})$ to spread to all the network.  \end{theorem}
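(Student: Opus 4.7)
The plan is to exploit the fact that for $\gamma > \beta_k$ the tails of the weak-tie distribution decay quickly enough that a single node almost never has $k$ simultaneously long weak ties; then without such long-range jumps the contagion can only propagate between adjacent blocks of a polynomial-scale tiling. Concretely, since $\gamma > \beta_k = 2(k+1)/k$ is equivalent to $k(\gamma-2) > 2$, I would tile the $\sqrt{n}\times\sqrt{n}$ torus into square blocks of side $s := n^{\alpha}$, where $\alpha := \frac{1}{k(\gamma-2)} + \epsilon'$ with $\epsilon' > 0$ chosen small enough that $\alpha < 1/2$. Call a weak tie \emph{long} if its Manhattan length is at least $s$, and call a node \emph{heavy} if at least $k$ of its $m$ weak ties are long.

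First I would show that with high probability no node is heavy. A direct tail computation, using that the number of grid points at distance exactly $d$ is $\Theta(d)$ and that the normalization $\lambda = \Theta(\gamma-2)$ is a constant, gives that a single weak tie is long with probability $\sum_{d \geq s} \Theta(d)\cdot \lambda / d^{\gamma} = O(s^{2-\gamma})$. Hence any fixed node is heavy with probability at most $\binom{m}{k}(O(s^{2-\gamma}))^k = O(s^{-k(\gamma-2)}) = O(n^{-1 - k\epsilon'(\gamma-2)})$. This bound holds uniformly for $K^W_{m,\gamma}(n)$ and $K^I_{m,\gamma}(n)$: in the $K^I$ case the weak ties are i.i.d.\ and may collide as multi-edges, but crossing the $k$-infected-neighbor threshold using only weak ties directed to non-adjacent blocks still requires at least $k$ of the node's $m$ slots to land at long distance. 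A union bound over the $n$ nodes then gives that no node is heavy, except with probability $O(n^{-\beta})$ for a suitable $\beta > 0$.

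Next, conditional on the ``no heavy node'' event, I would show by induction on $t$ that the set of dirty blocks---those containing at least one infected node---grows by at most one step in the block graph per round. Suppose a block $B$ first becomes dirty at round $t+1$ through a new infection at $v \in B$. The strong-tie neighbors of $v$ all lie at Manhattan distance at most $\sqrt{m} = O(1) < s$ (for $n$ large), so they sit in $B$ or an adjacent block; any weak tie from $v$ to a non-adjacent block has length at least $s$ by the tiling. If no block in the closed neighborhood of $B$ was dirty at round $t$, then all $k$ infected neighbors of $v$ must lie in non-adjacent blocks, forcing $v$ to have at least $k$ long weak-tie slots (for $K^I$ the multi-edge case only makes this requirement more stringent, not less), contradicting that $v$ is not heavy. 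Since the initial $k$-seed cluster is contained in $O(1)$ blocks and the block torus has diameter $\Theta(\sqrt{n}/s) = \Theta(n^{1/2-\alpha})$, the contagion cannot reach every block in fewer than $\Omega(n^{\delta})$ rounds, with $\delta := 1/2 - \alpha > 0$.

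The main obstacle is the balancing act in choosing $s$: the union bound for ``no heavy node'' requires $s = \omega(n^{1/(k(\gamma-2))})$, while a polynomially small block diameter requires $s = o(n^{1/2})$, and these are simultaneously satisfiable precisely when $k(\gamma-2) > 2$, i.e., $\gamma > \beta_k$. A secondary delicate point is the uniform treatment of $K^W$ and $K^I$: one must argue carefully that in the multi-edge model, any ``cheap'' propagation via repeated weak ties to a small set of far infected endpoints is already controlled by the same per-node event ``at least $k$ long weak-tie slots,'' so no separate analysis is needed.
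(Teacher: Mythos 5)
Your proposal is correct and takes essentially the same approach as the paper's proof: tile the torus into polynomial-sized blocks, bound the probability that any node has $k$ weak ties reaching non-adjacent blocks (the "heavy node" event), take a union bound, and then argue the contagion spreads block-by-block, giving an $\Omega(n^{\delta})$ lower bound. The only cosmetic differences are the parameterization (you set the block side from $\gamma$ directly rather than via the paper's intermediate $\epsilon$) and the phrasing of the final counting step (block-graph diameter vs.\ counting infected rows of blocks).
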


We have proved an upper bound in $K^{W}_{m,\gamma}(n)$, the model which does not allow multi-edges, for $\gamma\in [2, \alpha_k)$ and an upper bound in $K^{I}_{m,\gamma}(n)$, the model which allows multi-edges, for $\gamma \in [2, \beta_k)$. Theorem~\ref{thm:lowerbound-too-short} gives a lower bound for the range of $\gamma\in(\beta_k ,\infty)$ for both $K^{W}_{m,\gamma}(n)$ and $K^{I}_{m,\gamma}(n)$ models.

We are left with one case: What is the speed of $k$-complex contagions in $K^{W}_{m,\gamma}(n)$ when $\gamma \in (\alpha_k, \beta_k]$? This proves to be the trickiest case of them all, because it turns out that when $\gamma \in (\alpha_k, \beta_k]$, long ties are still abundant in the $K^{W}_{m,\gamma}(n)$ graph. This means that infections can occur between parts of the graph that are far from each other. And that is why the argument of Theorem~\ref{thm:lowerbound-too-short} can not be extended to solve the case of $\gamma \in (\alpha_k, \beta_k]$ for
$K^{W}_{m,\gamma}(n)$.

A better dissection of the nature of a complex contagion is needed to solve this case.
We prove a polynomial lower bound for the $K^{W}_{m,\gamma}(n)$ model when  $\gamma \in (\alpha_k, \infty) = (\alpha_k, \beta_k] \cup (\beta_k, \infty)$.
We achieve this by carefully analyzing the possible scenarios that could lead to generation of \emph{new seed clusters} in far-away distances. We show that despite the existence of infections occurring over large distances, new seed clusters
do not appear in these distances with high probability.

\begin{theorem}\label{thm:lowerbound-cloud-gen}
Consider the Kleinberg's small world model $K^{W}_{m,\gamma}(n)$ (without multi-edges) of $n$ vertices where $\frac{2(k^2 + k + 2)}{k(k + 1)}=\alpha_k<\gamma $. Let $0 < \epsilon < \frac{1 - \alpha_k /  \gamma }{2}$,
and $1 + {k+1 \choose 2}(1 - \gamma(1/2 -  \epsilon)) < \zeta < 0$.
Start a $k$-complex contagion from a $k$-seed cluster  $\mathcal{I}$.  Then, with probability at least $1- O(n^{\zeta})$, it takes the contagion at least $n^\epsilon/3$ rounds to spread to the whole network.
\end{theorem}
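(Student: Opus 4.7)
The plan is to show that, with probability at least $1-O(n^\zeta)$ over the random weak ties of $K^W_{m,\gamma}(n)$, no new $k$-seed cluster is generated at distance greater than $n^\epsilon$ from the initial seed during the first $n^\epsilon/3$ rounds. Since strong ties alone can extend the infected region by only $O(1)$ grid-steps per round, the absence of such far clusters immediately confines the contagion to a ball of radius $O(n^\epsilon)$ for these rounds, which falls well short of the $\Theta(\sqrt n)$ diameter of the torus.

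First I would identify the combinatorial bottleneck. Suppose $C=\{u_1,\ldots,u_k\}$ is a would-be $k$-cluster far from the currently infected region. Because $m>k$, it suffices to consider the case where $C$ is a grid-clique, since any non-clique cluster demands strictly more external infections and therefore can only make the lower bound easier. Ordering the members of $C$ by their rounds of first infection, member $u_i$ can use at most $u_1,\ldots,u_{i-1}$ as grid-infected neighbors and thus needs $k-(i-1)$ additional infected neighbors from outside $C$. Summing yields a total of $k+(k-1)+\cdots+1={k+1 \choose 2}$ edges from $C$ to previously-infected nodes lying outside $C$, mirroring the clique-counting argument used for $\gamma=2$ in the discussion of Theorem~\ref{thm:gamma-2}.

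Second I would run the probabilistic union bound. Anchor the initial seed at the origin and consider a fixed cluster $C$ at grid-distance $r>n^\epsilon$. The probability that any one of $C$'s $O(1)$ weak ties lands in the ball $B(0,n^\epsilon)$ (which has size $O(n^{2\epsilon})$) is at most $O(n^{2\epsilon}/r^\gamma)$, so the probability that $C$ supplies ${k+1 \choose 2}$ weak ties into $B(0,n^\epsilon)$ is $O\bigl((n^{2\epsilon}/r^\gamma)^{{k+1 \choose 2}}\bigr)$. Multiplying by the $O(r)$ candidate clusters at distance $r$ and summing over $r\in[n^\epsilon,\sqrt n]$ (the tail is geometric and dominated by its lower limit precisely when $\gamma{k+1 \choose 2}>2$, which holds since $\gamma>\alpha_k>2$), then union-bounding over the $O(n^\epsilon)$ rounds and over choices of target in the infected region, should, after careful bookkeeping of the constants and normalization factor, yield a total probability bound of the form $O(n^\zeta)$ with $\zeta=1+{k+1 \choose 2}(1-\gamma(1/2-\epsilon))$. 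The hypothesis $\epsilon<(1-\alpha_k/\gamma)/2$ is precisely what forces $\zeta<0$.

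Third I would close the gap between \emph{no far cluster forms using direct weak ties into $B(0,n^\epsilon)$} and \emph{no far cluster forms at all}, which is the principal obstacle. The subtlety is that isolated far-infected nodes, produced by individual weak ties from the initial ball, could in principle cooperate with $C$ so that $C$ needs fewer than ${k+1 \choose 2}$ direct weak ties into $B(0,n^\epsilon)$. The way around this is to observe that each such isolated far node itself demands $k$ weak ties back to the initial ball; telescoping, any chain of partial activations that ultimately produces a fully-infected far cluster still requires at least ${k+1 \choose 2}$ long weak ties (in the minimum-cost, clique-ordered count) to trace back to $B(0,n^\epsilon)$, which is the quantity union-bounded above. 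Making this telescoping rigorous across all possible cascading patterns -- precisely what the paper calls ``carefully analyzing the possible scenarios that could lead to generation of new seed clusters in far-away distances'' -- is the technical heart of the proof. Once it is in place, the absence of far clusters together with the $O(1)$-per-round local spread via strong ties yields the stated $\Omega(n^\epsilon)$ round lower bound with probability $1-O(n^\zeta)$.
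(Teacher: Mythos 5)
Your combinatorial count is right --- a far $k$-seed cluster needs, by ordering its members, $k + (k-1) + \cdots + 1 = \binom{k+1}{2}$ infected neighbors from outside the cluster, exactly as in the paper's Proposition~\ref{lem:DAG-long-ties-count} --- but the probabilistic setup built on top of it does not work. You classify a tie as dangerous if it \emph{lands in the fixed ball} $B(0,n^\epsilon)$, giving a per-tie probability $O(n^{2\epsilon}/r^\gamma)$ for a cluster at distance $r$. But the infected set is not confined to that ball: throughout the first $n^\epsilon/3$ rounds individual far nodes get infected, and $C$'s $\binom{k+1}{2}$ required edges may all point at \emph{those}, not at $B(0,n^\epsilon)$. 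Your step~3 acknowledges this, but the proposed fix is false: an isolated far node does not ``demand $k$ weak ties back to the initial ball'' --- it demands $k$ ties to \emph{some} infected nodes, which may themselves be far and recently infected, or may be grid neighbors of a previously far-infected node, so the chain of weak ties you are trying to telescope does not all terminate in $B(0,n^\epsilon)$. This is not a detail; it is the crux. A symptom of the gap is that your geometric sum over $r\in[n^\epsilon,\sqrt n]$ yields exponent $\epsilon\bigl(\binom{k+1}{2}(2-\gamma)+2\bigr)$, which tends to $0$ as $\epsilon\to 0$, whereas the theorem's $\zeta = 1+\binom{k+1}{2}(1-\gamma(1/2-\epsilon))$ stays bounded away from $0$; your bound would be vacuous precisely when the statement is strongest.

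The paper avoids the moving-target problem by making ``long'' a property of the edge alone: a tie is \emph{long} iff its length is at least $n^{1/2-\epsilon}$, regardless of where the infection currently lives. It then builds the infection DAG $D(G,A)$, defines $\mathcal{A}(S)$ as the short-tie ancestry of $S$ in the DAG, and proves (Lemma~\ref{lem:eitheror}, via a topological ordering and a minimal vertex $v_0$ with $|\mathcal{A}(v_0)|\geq k$) that if the contagion reaches a far $k$-seed cluster $B$ with $A\cap\mathcal{A}(B)=\emptyset$, then \emph{some} connected subgraph of size at most $k^2-k+1$ carries $\binom{k+1}{2}$ long ties. This is the rigorous version of your ``telescoping'': it works because the count of long ties is intrinsic to the DAG, not to any chosen ball. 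The probability bound (Lemma~\ref{lem:shortage-long-ties-on-paths-gen}) then takes per-tie long-edge probability $O(n^{1-(1/2-\epsilon)\gamma})$ and union-bounds over the at most $n\cdot O(\log^{k^2-k}n)$ connected subgraphs of that size (after controlling the maximum degree), giving exactly $O(n^\zeta)$. Finally, if no such subgraph exists, then $A\cap\mathcal{A}(B)\neq\emptyset$ for every far $B$, so a short-tie DAG path of length at least $n^\epsilon/3$ connects them, and Proposition~\ref{claim:longest} turns that path length into a round lower bound. Without the absolute-length definition of ``long'' and the DAG ancestry argument, the union bound you attempt does not close.
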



\subsection{Proof of Theorem~\ref{thm:small-gamma}}\label{sec:lowerbound-random}

Recall that when $0\leq \gamma<2$, the normalization factor $\lambda=\Theta(n^{\gamma/2-1})$. That is, the probability that node $p$ chooses node $q$ as a neighbor through a random edge is $\lambda /|pq|^\gamma=\Theta(n^{\gamma/2-1} /|pq|^\gamma)$.


Recall that in network $K^*_{m,\gamma}(n)$ each node has $m$ strong ties to the $m$ closest nodes in the grid, and $m$ weak ties randomly distributed to other nodes by the specified distribution. Assume that $s$ is a node in the graph. The initial seeds are a $k$-seed cluster which includes $s$ along with the closest $k$ nodes on the grid to $s$. Consider the set $D$ of nodes within Manhattan distance $n^{\delta}$ from $s$. 
We prove that the contagion inside $D$ would not utilize weak ties and can only spread along the strong ties until $D$ is completely covered. Since propagation along strong ties is local, it would require $\Omega(n^\delta)$ rounds to just cover $D$.

We say a node $u$ has a wide bridge to $D$, if  $u$ issues at least $k$ weak ties to nodes in $D$. We are going to show the following two statements:
\benum
\item The expected number of nodes in the annulus $A(2n^{\delta}, \sqrt{n})$, defined as the set of nodes whose Manhattan distance to $s$ is within the range $(2n^{\delta}, \sqrt{n}]$, with wide bridges to $D$, denoted as $Z_1$, is small;
\item The expected number of nodes in the disk $B(2n^{\delta})$, defined as the nodes whose Manhattan distance is at most $2n^{\delta}$ from $s$, that have wide bridges to $D$, denoted as $Z_2$, is small.
\eenum

First, we compute $Z_1$. Let $u$ be a node in the annulus $A(2n^{\delta}, \sqrt{n})$.
Then the distance of $u$ to any node $q$ inside $D$, $|uq|$, is bounded by constant factors of the distance from $u$ to the center $s$, $|us|$. That is, $c_1 |us| < |uq|< c_2 |us|$, with two constants $c_1 <c_2$. Let $y=|us|$ be the Manhattan distance of $u$ to $s$. One can easily verify that
\begin{align*}
&Q_1= \Prob{\text{Node $u \in A(2n^{\delta}, \sqrt{n})$ has 1 weak tie to $D$}}= O \left( \frac{\lambda }{y^\gamma } n^{2\delta}\right) \\
&Q_2= \Prob{\text{Node $u \in A(2n^{\delta}, \sqrt{n})$ has at least $k$ weak ties to $D$}}= O \left( \left( \frac{\lambda }{y^\gamma } n^{2\delta} \right)^k \right).
\end{align*}
We explain the second equation. 
Note that the outgoing edges of a node are independent of each other in $K^{I}_{m,\gamma}(n)$, but not
in $K^{W}_{m,\gamma}(n)$. While the out going edges of a node are not completely independent, they are close enough:  No matter what configuration the first (up to) $j-1$ ties have, the next tie has a $O(Q_1)$  probability of connecting to $D$.  Thus, the probability that any given vertex has $j$ long-range ties is bounded by $O((Q_1)^j)$.

Now, we compute $Z_1$ using linearity of expectation:
\begin{align*}
 Z_1 =\int_{2n^{\delta}}^{\sqrt{n}} O\left( \left( \frac{\lambda}{y^{\gamma}} \cdot n^{2\delta} \right )^k \right) \Theta( y) dy &=O \left( \lambda^k n^{2k\delta} \right) \int_{2n^{\delta}}^{\sqrt{n}} y^{1-k\gamma} dy \\
 & =
\begin{cases}
{\rm O} \left( 1/n^{k-1-2k\delta} \right), 0\leq \gamma<2/k\\
{\rm O} \left( \log n/n^{k-1-2k\delta} \right), \gamma=2/k\\
{\rm O} \left( 1/n^{k(\frac{1}{2}-\delta)(2-\gamma)-2\delta}\right), 2/k<\gamma<2
\end{cases}
 \end{align*}

Now, we bound $Z_2$. Consider a node $u$ in $B(2n^{\delta})$. Notice that $D$ is contained in the disk $D'$ of radius $3n^{\delta}$ from $u$. Thus, the probability that one weak tie issued by $u$ falls inside $D$ is no greater than the probability $P$ of this weak tie connecting to nodes in $D'$. The latter can be bounded from above:
$$P\leq \int_1^{3n^{\delta}} \frac{\lambda}{y^{\gamma}} ydy =\lambda (3n^{\delta})^{2-\gamma} =
O \left ( 1/n^{(\frac{1}{2}-\delta)(2-\gamma)} \right).$$
Hence, we get that the probability that node $u \in B(2n^{\delta})$ has $k$ weak tie to $D$ is at most $P^k$. We bound $Z_2$:
\begin{align*}
Z_2 ={\rm O} \left ( (2n^{\delta})^2 /n^{k(\frac{1}{2}-\delta)(2-\gamma)} \right ) ={\rm O} \left ( 1/n^{k(\frac{1}{2}-\delta)(2-\gamma)-2\delta} \right )
\end{align*}

We can verify that $Z_1=O(1/n^{1-\eps})$ and $Z_2=O(1/n^{1-\eps})$ for $0<\eps<1$ when
$$\delta <\min \left( \frac{k-2+\eps/2}{2k},\frac{k-\gamma/2-1+\eps}{2+2k-k\gamma} \right) .$$
Now, by Markov inequality, $Z_1 \geq 1$ and $Z_2 \geq 1$ only happen with small probability.
By union bound, we have that $Z_1=Z_2=0$ with probability at least
$1-{\rm O} (1/n^{1-\eps})$.

If there are no wide bridges to/from $D$, the contagion inside $D$ can only utilize
the strong ties and it will take it $\Omega(n^{\delta})$ rounds to cover $D$.
Hence, the contagion speed is $\Omega \left (n^{\delta}\right)$ with probability at
least $1-{\rm O} (1/n^{1-\eps})$.

\subsection{Proof of Theorem~\ref{thm:lowerbound-too-short}}
Fix $k$ and $\gamma > \beta_k$.  Define $\epsilon > 0$ so that $\gamma = \frac{2(k+\epsilon + 1)}{k}$.  Let $\delta = \frac{\epsilon}{4(1 + \epsilon)}$, $\beta = \epsilon/3$ and then fix a constant $m$.


The basic proof strategy is to divide the 2-dimensional $\sqrt{n} \times \sqrt{n}$ grid into $n^{\frac{1}{2} - \delta} \times n^{\frac{1}{2} - \delta}$ sized blocks. We call two blocks adjacent if they are next to each other, or are diagonally adjacent.  Any two vertices for two non-adjacent blocks are of distance at least $n^{\frac{1}{2} - \delta}$. We show that the probability that any vertex in a block has $k$ neighbors in non-adjacent blocks on the grid is at most $n^{-\beta}$. We then show that the $k$-complex contagion must spread via adjacent blocks. If this is the case, then it will take time $\Omega(n^{1/2 - \delta})$ to cross from the block containing the initial seeds to the block furthest away on the grid (wrapped up to a torus).

We will use the following two facts:
\benum
\item The probability that a weak tie issued by a node $u$ has Manhattan distance greater than $n^{\frac{1}{2} - \delta}$ is at most $P_1=\int_{n^{\frac{1}{2} - \delta}}^{\sqrt{n}} x^{-\gamma} x dx = O(1/n^{(\frac{1}{2} - \delta)(\gamma - 2)}).$
\item The probability that any vertex has $k$ neighbors of distance greater than $n^{\frac{1}{2} - \delta}$ is at most
$$O(1/n^{k(\frac{1}{2} - \delta)(\gamma - 2)-1}).$$

The outgoing edges of a node are independent of each other in $K^{I}_{m,\gamma}(n)$, but not
in $K^{W}_{m,\gamma}(n)$. While the out going edges of a node are not completely independent, they are close enough: No matter what configuration the first (up to) $j-1$ ties have, the next tie has a $O(P_1)$  probability of being longer than $n^{\frac{1}{2} - \delta}$.  Thus, the probability that any given vertex has $j$ long-range ties is bounded by $O((P_1)^j)$.
Consider a vertex $u$ which issues $m$ weak ties.  Each one has probability at most $O(1/n^{(\frac{1}{2} - \delta)(\gamma - 2)})$ of being longer than $n^{\frac{1}{2} - \delta}$.  Thus, the probability that $k$ weak ties are longer than $n^{\frac{1}{2} - \delta}$ is at most $$O \left ({m \choose k}(1/n^{(\frac{1}{2} - \delta)(\gamma - 2)})^k \right)= O \left (1/n^{(\frac{1}{2} - \delta) k(\gamma - 2)} \right).$$
Taking a union bound over $n$ vertices, the claim is proved.
\eenum

Now, by manipulations we see that $O(1/n^{(\frac{1}{2} - \delta)(\gamma - 2)-1}) \subseteq O(1/n^{\beta})$ because $\gamma - 2 = \frac{2 + 2 \epsilon}{k}$ and $\delta = \frac{\epsilon}{4(1 + \epsilon)}$ so we get $O(1/n^{(\frac{1}{2} - \frac{\epsilon}{4(1 + \epsilon)})(2 + \epsilon)-1})$ and after manipulations, this is $O(1/n^{\frac{\epsilon}{2}}) \subseteq O(1/n^{\beta})$.

If no vertex has $k$ weak ties of distance at least $n^{1/2-\delta}$, in one round the $k$-complex contagion can only spread to blocks that are adjacent to those blocks already containing infected vertices.  By the above claim, this is true with probability at least $1-n^{-\beta}$. In this case, the number of steps required is at least $\frac{\sqrt{n}}{(2k+1)n^{1/2-\delta}} > \frac{n^{\delta}}{3k}$.  Initially, at most $k$ rows of the grid are infected.  At each step, the infection can only move one row to the right or left of the infected regions.  After $r$ rounds there are at most $(2r + 1)k$ infected rows.  So, to infect $n^{\delta}$ rows, it takes $\Omega(n^{\delta}/(2k+1))$ rounds.
\qed

\subsection{Proof of Theorem~\ref{thm:lowerbound-cloud-gen}}
We say that a weak tie is \defn{long} if its length is at least  $n^{\frac{1}{2}-\epsilon}$ and it is \defn{short} otherwise. The definition of short ties also includes \emph{grid edges}. Let $G=K^{W}_{m,\gamma}(n)$ denote the graph.

\begin{definition}
We create a \defn{DAG, $D(G, A)$,} out of the route of infection on a network $G$ starting from $A$:  There is an \defn{edge from any node (not in $A$) to the $k$ nodes that \emph{first caused} it to be infected}. In the case that a node had more than $k$ neighbors at the time of infection, arbitrarily choose $k$.

For any subset of nodes $S$, define \defn{$\mathcal{A}(S)$ to be the nodes reachable from $S$ using \emph{only short ties} in the $D(G, A)$}. $\mathcal{A}(S)$ includes $S$ and the edges must be traversed in accord with their orientation.

Let \defn{$\mathcal{L}(S)$ be the number of long range ties connected to nodes in $\mathcal{A}(S)$}.  We overload notation and use $\mathcal{A}$ and $\mathcal{L}$ on singleton vertices.
\end{definition}

\begin{proposition} \label{claim:longest}
Let nodes $u$ and $v$ be connected by a (directed) path in $D(G, A)$; the time difference between when $u, v$ are infected is at least the length of the path.
\end{proposition}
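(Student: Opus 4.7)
The plan is a straightforward induction on the length $\ell$ of the directed path from $u$ to $v$ in $D(G, A)$. The key observation driving everything is the meaning of an edge in $D(G, A)$: an edge $x \to y$ says that $y$ was one of the $k$ nodes whose prior infection triggered the infection of $x$. By the definition of a $k$-complex contagion, those $k$ triggering nodes must already have been infected in the previous round, so $y$ is infected at least one round strictly before $x$.

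For the base case $\ell = 1$, the path is a single edge $u \to v$, and the observation above immediately yields that $v$ is infected at least one round before $u$, as required.

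For the inductive step, consider a path $u = w_0 \to w_1 \to w_2 \to \cdots \to w_\ell = v$ of length $\ell \ge 2$. The single edge $w_0 \to w_1$ tells us that $w_1$ is infected at least one round before $w_0 = u$. The subpath $w_1 \to \cdots \to w_\ell$ has length $\ell - 1$, so by the inductive hypothesis $v = w_\ell$ is infected at least $\ell - 1$ rounds before $w_1$. Chaining the two inequalities, $v$ is infected at least $\ell$ rounds before $u$, which is exactly what the proposition claims.

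I do not expect any real obstacle here; the content of the statement is essentially a restatement of the definition of $D(G, A)$ together with the one-round delay built into the complex contagion rule. The only subtlety worth flagging is the direction convention of edges in $D(G, A)$ (edges point from a newly infected node back to its parents), so the phrase ``path from $u$ to $v$'' of length $\ell$ must be read as ``$v$ is an $\ell$-step ancestor of $u$ in the infection history,'' after which the induction is routine.
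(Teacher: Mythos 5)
Your proposal is correct and takes essentially the same approach as the paper: both rest on the observation that each edge of $D(G,A)$ connects a node to a parent infected at least one round earlier, from which the claim follows along the path (the paper states this in one line; you spell it out as an induction on path length).
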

\begin{proof}
Follows from the fact that any two nodes connected by an edge were infected $\geq 1$ round apart.
\end{proof}

\begin{lemma}\label{lem:eitheror}
Let a cascade start from $k$-seed cluster $A$ and let $B$ be another $k$-seed cluster:
\begin{itemize} \denselist
\item Either $A \cap \mathcal{A}(B) \neq \emptyset$;
\item or there exists a connected subset of nodes of size $k^2-k+1$ in $K^{W}_{m,\gamma}(n)$, that has at least ${k+1 \choose 2}$ \emph{long} ties connected to its vertices.
\end{itemize}
\end{lemma}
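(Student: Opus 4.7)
The plan is to assume the first alternative fails, i.e.\ $A \cap \mathcal{A}(B) = \emptyset$, and to construct explicitly the dense substructure promised by the second alternative. The argument splits into a structural observation about $\mathcal{A}(B)$, a counting step that produces $\binom{k+1}{2}$ long ties, and a packing step that places them inside a small connected set.

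First, I would record the key consequence of the assumption: every $v \in \mathcal{A}(B)$ is non-seed, hence has exactly $k$ predecessors in the cascade DAG $D(G,A)$, and by the defining closure of $\mathcal{A}(B)$ every short-tie predecessor of $v$ already lies in $\mathcal{A}(B)$. In particular, any predecessor of $v$ outside $\mathcal{A}(B)$ must be reached through a long tie.

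Second, I would enumerate the nodes of $\mathcal{A}(B)$ in order of infection time as $u_1 \prec u_2 \prec \cdots$. The predecessors of $u_i$ in $D$ are strictly earlier in time, so at most $i-1$ of them can lie inside $\mathcal{A}(B)$, forcing at least $k-i+1$ of them to be long-tie predecessors. These long ties have pairwise distinct tails $u_i$, so they are themselves distinct as edges of $K^W_{m,\gamma}(n)$. Summing over $i=1,\ldots,k$ yields
\[
\sum_{i=1}^{k}(k-i+1) \;=\; \binom{k+1}{2}
\]
long ties incident to $\{u_1,\ldots,u_k\}$.

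Third, I would embed $\{u_1,\ldots,u_k\}$ into a connected subset $C$ of $K^W_{m,\gamma}(n)$ of size at most $k^2-k+1$. Since $\mathcal{A}(B)$ is short-tie connected (every node backward-reaches $B$ via short ties in $D$, which are themselves $K^W$-edges), I would start from $B$ (connected by grid edges, of size $k$) and iteratively annex short-tie ancestors until every $u_i$ is captured. The claim then reduces to bounding the number of intermediate ancestor nodes by $(k-1)^2$, so that $|C|\le k+(k-1)^2=k^2-k+1$.

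The main obstacle is precisely this size bound. The expected argument is an amortized double count on the short-tie ancestor DAG: whenever a chain of intermediate nodes is required to reach some $u_i$, the nodes along that chain either have all their predecessors outside $\mathcal{A}(B)$, contributing extra long ties that offset the additional nodes, or they share predecessors with $u_j$'s already included, which shortens the chain. Making this trade-off sharp -- so that no long tie is double-counted and the budget of $(k-1)^2$ extra nodes is respected -- is the delicate step, ultimately driven by the fact that $D$ has in-degree exactly $k$ at every non-seed node.
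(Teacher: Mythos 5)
Your counting step is sound and matches the paper's Proposition~\ref{lem:DAG-long-ties-count}: ordering $\mathcal{A}(B)$ by infection time, the $i$-th node $u_i$ can have at most $i-1$ short-tie predecessors (all short-tie predecessors of a node in $\mathcal{A}(B)$ are again in $\mathcal{A}(B)$ and are infected earlier, so lie among $u_1,\dots,u_{i-1}$), hence at least $k-i+1$ long ties, giving $\binom{k+1}{2}$ in total over $u_1,\dots,u_k$. The gap is exactly where you flag it: these $k$ vertices are the globally earliest nodes of $\mathcal{A}(B)$, and nothing constrains them to lie near one another. The set $\mathcal{A}(B)$ can be enormous -- if the contagion travels to $B$ largely through grid edges, $\mathcal{A}(B)$ can contain a linear fraction of the graph -- and the earliest $k$ nodes in it can be arbitrarily scattered along the short-tie ancestry tree. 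There is no reason a connected hull of $\{u_1,\dots,u_k\}$ inside $\mathcal{A}(B)$ should have size $\le k^2-k+1$, and the amortization you sketch does not obviously recover this: the target statement has a \emph{fixed} size bound $k^2-k+1$ and a \emph{fixed} long-tie count $\binom{k+1}{2}$, so there is no slack to trade extra path nodes against extra long ties; that trade would change the statement (and the union bound in Lemma~\ref{lem:shortage-long-ties-on-paths-gen}) into something that would have to be re-proved over all sizes.

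The paper's proof avoids the embedding problem by choosing the \emph{connected witness set first}, via a minimality argument, and only then counting long ties inside it. It distinguishes two cases. If every $v\in B$ has $|\mathcal{A}(v)|\le k-1$, then $\mathcal{A}(B)=\bigcup_{v\in B}\mathcal{A}(v)$ has size between $k$ and $k(k-1)\le k^2-k+1$, is connected (it is a short-tie-reachable closure, and short ties are $K^W$-edges), and carries $\ge\binom{k+1}{2}$ long ties by the same Proposition~\ref{lem:DAG-long-ties-count} count applied to $\mathcal{A}(B)$. If instead some $v\in B$ has $|\mathcal{A}(v)|\ge k$, one takes the minimal $v_0\in\mathcal{A}(v)$ (earliest in a topological order of the short-tie DAG) with $|\mathcal{A}(v_0)|\ge k$; minimality forces every proper short-tie descendant $u$ of $v_0$ to have $|\mathcal{A}(u)|\le k-1$, and since $v_0$ has at most $k$ short-tie out-neighbors in $D(G,A)$ this gives $|\mathcal{A}(v_0)|\le 1+k(k-1)=k^2-k+1$, while $|\mathcal{A}(v_0)|\ge k$ again yields $\ge\binom{k+1}{2}$ long ties. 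In short: don't pick the $k$ ``cheap'' nodes first and then try to connect them; pick a small connected $\mathcal{A}(\cdot)$-closure first, for which connectivity and the size bound come for free, and then count the long ties it must contain.
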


The following lemma is based on computation and its proof is presented at the end of this subsection.
 \begin{lemma}\label{lem:shortage-long-ties-on-paths-gen}
Let $\alpha_k<\gamma $,  $0 < \epsilon < \frac{1-\alpha_k/\gamma}{2}$, and $1 + {k+1 \choose 2}(1 - \gamma(\frac{1}{2} - \epsilon)) < \zeta < 0$.  Then with probability at least $1- O \left (  n^{\zeta} \right )$, no connected subset of nodes of size $k^2-k+1$ in $K^{W}_{m,\gamma}(n)$, has at least ${k+1 \choose 2}$ \emph{long} ties connected to its vertices.
\end{lemma}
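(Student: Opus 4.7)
The plan is to prove Lemma~\ref{lem:shortage-long-ties-on-paths-gen} by a union bound over all connected $s$-subsets of $K^{W}_{m,\gamma}(n)$ with $s := k^2 - k + 1$, coupled with a factorial-moment bound on the number of long ties each such subset can carry. The driving quantity is the probability that an individual weak tie is long; using $\lambda = \Theta(\gamma - 2)$ for $\gamma > 2$, direct integration gives
\[
P_{\mathrm{long}} := \Prob{\text{a weak tie has length} \geq n^{1/2 - \epsilon}} = \Theta\!\left(\lambda \int_{n^{1/2-\epsilon}}^{\sqrt{n}} x^{1-\gamma}\,dx\right) = \Theta\!\left(n^{-(1/2-\epsilon)(\gamma-2)}\right),
\]
which is polynomially small precisely because $\gamma > 2$.

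For a fixed $s$-subset $V'$, let $X_{V'}$ count the long ties with at least one endpoint in $V'$. The outgoing contribution is bounded by $s m \cdot P_{\mathrm{long}} = O(P_{\mathrm{long}})$, since $V'$ issues $O(1)$ weak ties in total, each long (independently across sources) with probability $P_{\mathrm{long}}$. For the incoming contribution, summing $\sum_{v \in V'} \sum_{u \,:\, |uv| \geq n^{1/2-\epsilon}} m\lambda/|uv|^\gamma$ also yields $O(P_{\mathrm{long}})$, so $\mathbb{E}[X_{V'}] = O(P_{\mathrm{long}})$. Because weak ties issued by distinct source vertices are independent and the without-replacement sampling within a single source (with $m = O(1)$) is negatively associated, a standard factorial-moment inequality yields $\Prob{X_{V'} \geq \binom{k+1}{2}} \leq \mathbb{E}[X_{V'}]^{\binom{k+1}{2}}/\binom{k+1}{2}! = O(P_{\mathrm{long}}^{\binom{k+1}{2}})$. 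Next, I would bound the number of candidate subsets: because strong ties form a bounded-degree graph and each vertex issues only $m = O(1)$ weak ties, a branching enumeration over rooted tree skeletons shows that the expected number of $s$-subsets connected via short ties (grid edges plus weak ties shorter than $n^{1/2-\epsilon}$) is $O(n)$. Combining,
\[
\Prob{\text{a bad connected subset exists}} = O\!\left(n \cdot P_{\mathrm{long}}^{\binom{k+1}{2}}\right) = O\!\left(n^{1 - \binom{k+1}{2}(1/2-\epsilon)(\gamma-2)}\right),
\]
and a short rearrangement writes this exponent as $1 + \binom{k+1}{2}\bigl(1 - \gamma(1/2-\epsilon)\bigr) - 2\epsilon\binom{k+1}{2}$, which is strictly less than any $\zeta$ satisfying the hypothesis, yielding the desired $O(n^\zeta)$ bound.

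The step I expect to be the most delicate is the global union bound, for two reasons. First, the events ``$V'$ is connected via short ties'' and ``$V'$ carries $\binom{k+1}{2}$ long ties'' are not quite independent, because the same weak tie is classified as short or long only after its random target is drawn; I would argue that this coupling does not inflate either bound by observing that distinct source nodes contribute independently and that $m$ is a constant, so edge categories within a single source only introduce multiplicative constants. Second, the counting of connected $s$-subsets must remain robust against vertices of atypically large incoming weak-tie degree; the fix is to enumerate each subset as a rooted tree grown only through the $O(1)$ \emph{outgoing} short ties of each newly added node, so the branching factor stays bounded regardless of in-degrees, while the verification that such a rooted-tree skeleton exists happens automatically once the subset is connected.
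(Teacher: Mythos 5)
Your high-level plan — union bound over connected $s$-subsets times the probability that a fixed subset carries $\binom{k+1}{2}$ long ties — is the same as the paper's, and your per-subset probability bound is sound (in fact slightly tighter: you integrate to get $P_{\mathrm{long}} = \Theta(n^{-(1/2-\epsilon)(\gamma-2)})$ where the paper uses the cruder $O(n^{1-(1/2-\epsilon)\gamma})$, and the $\mu^t/t!$ factorial-moment bound for (near-)independent indicators is valid). The paper only tracks \emph{outgoing} long ties of $S$ — which is all that Proposition~\ref{lem:DAG-long-ties-count} needs — whereas you also bound incoming ones, which is harmless.

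The genuine gap is in your count of candidate subsets. You assert that every connected $s$-subset admits a rooted spanning-tree skeleton in which each edge is traversed via its tail's $O(1)$ outgoing short ties, and that therefore there are only $O(n)$ such subsets; you say ``the verification that such a rooted-tree skeleton exists happens automatically once the subset is connected.'' This is false. Take a node $w$ with incoming weak ties from $v_1,\dots,v_{s-1}$ and no other edges among $\{w,v_1,\dots,v_{s-1}\}$. This set is connected (via those short ties in the undirected sense), but no root admits a spanning tree in which every tree edge is the \emph{tail} node's outgoing tie: rooting at $v_i$ lets you reach $w$, but $w$ has no outgoing ties to the other $v_j$'s; rooting at $w$ does not even reach $v_1$. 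So a bounded-out-degree branching enumeration misses such subsets entirely, and there is no $O(n)$ bound on all connected $s$-subsets — the in-degree can be $\Theta(\log n/\log\log n)$, and the count is genuinely $n\cdot\mathrm{polylog}(n)$, not $O(n)$. This is exactly why the paper conditions on the max-degree event $H(n)\le\log n$ (which fails only with probability $n^{-m\log\log n+1}$) and then enumerates spanning trees branching by \emph{full} degree, giving $n\cdot((k^2-k)\log n)^{k^2-k}$ subsets; the extra $\mathrm{polylog}(n)$ factor is absorbed because the hypothesis on $\zeta$ is a strict inequality.

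A secondary issue: by restricting to ``$s$-subsets connected via short ties,'' you are proving a statement weaker than the lemma (which quantifies over all connected subsets in $K^W_{m,\gamma}(n)$). That weaker statement does turn out to suffice for Theorem~\ref{thm:lowerbound-cloud-gen}, because the subsets produced in Lemma~\ref{lem:eitheror} are $\mathcal{A}(v_0)$ or $\mathcal{A}(B)$ — reachability sets in the DAG via outgoing short ties — and those \emph{do} admit the rooted out-tree structure you want, so the $O(n)$ count is correct for them. But if you go that route you must explicitly rewrite the lemma to quantify only over DAG-reachability sets (or over subsets admitting an out-rooted short-tie spanning tree), rather than over arbitrary connected subsets, and you must say so; as written, your count is simply wrong for the stated lemma. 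Also be careful that ``expected number of such subsets'' times ``per-subset probability'' is not a union bound — the enumeration itself depends on the same random ties whose long/short classification you are analyzing. The paper sidesteps this by enumerating deterministically given a high-probability degree bound; you would need to do something similar (e.g., enumerate tree \emph{shapes} plus root vertex and union-bound over those, charging the realization probability of both the tree edges and the long ties to each shape).
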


We now use Proposition~\ref{claim:longest}, Lemma~\ref{lem:eitheror}, and Lemma~\ref{lem:shortage-long-ties-on-paths-gen} to prove Theorem~\ref{thm:lowerbound-cloud-gen}.

\begin{proof}[Proof of Theorem~\ref{thm:lowerbound-cloud-gen}]
Assume that the $k$-complex contagion starts from $k$-seed cluster $A$.  Let $B$ be a $k$-seed cluster that is distance $\sqrt{n}/3$ from $A$.  By Lemma~\ref{lem:shortage-long-ties-on-paths-gen}, we know that with probability $1- O \left ( n^{\zeta} \right )$, no connected subset of nodes of size $k^2-k+1$ in $G$ has at least ${k+1 \choose 2}$ \emph{long} ties connected to its vertices.  However, if this is the case, then by Lemma~\ref{lem:eitheror} we know that $A \cap \mathcal{A}(B) \neq \emptyset$.

If $A \cap \mathcal{A}(B) \neq \emptyset$, then there is a path of short edges from $A$ to $B$ over which the contagion is transmitted.  However, because $A$ and $B$ are distance $\sqrt{n}/3$ apart, such a path should travels distance $\sqrt{n}/3$ using edges that span at most distance $n^{\frac{1}{2} - \epsilon}$ and thus has at least $n^{\epsilon}/3$ edges.  $A$ is infected at time 0, so by Proposition~\ref{claim:longest} this implies that $B$ is not infected until round $n^{\epsilon}/3$.

Thus, with probability  $1- O \left ( n^{\zeta}\right )$, it takes  $ \geq n^{\epsilon}/3$ rounds to infect the entire graph.
\end{proof}
It remains to prove  Lemma~\ref{lem:eitheror} and the the following Proposition will help.

\begin{proposition} \label{lem:DAG-long-ties-count}
Fix  $s\leq k$ and let $S \subseteq \mathcal{A}(B)$ where $|\mathcal{A}(S)| \geq s$.  If $A \cap \mathcal{A}(B) = \emptyset$, then
$$\mathcal{L}(S) \geq \sum^{s-1}_{i =0 } (k-i) =  k+(k-1)+...+(k-s + 1).$$
\end{proposition}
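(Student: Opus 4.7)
The plan is to lower bound $\mathcal{L}(S)$ by ordering the $s$ earliest-infected nodes of $\mathcal{A}(S)$ and inspecting the $k$ outgoing edges that each of them has in the DAG $D(G, A)$.

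First I would establish that $\mathcal{A}(S) \subseteq \mathcal{A}(B)$: since $S \subseteq \mathcal{A}(B)$ and $\mathcal{A}(B)$ is closed under following short-tie edges in $D(G, A)$, any vertex reached from $S$ by short ties already lies in $\mathcal{A}(B)$. Combined with the hypothesis $A \cap \mathcal{A}(B) = \emptyset$, this gives $A \cap \mathcal{A}(S) = \emptyset$, so every $v \in \mathcal{A}(S)$ is not an initial seed and hence has exactly $k$ outgoing edges in $D(G, A)$, one per neighbor responsible for its infection.

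Next I would label the $s$ earliest-infected nodes of $\mathcal{A}(S)$ as $v_1, \dots, v_s$, with $v_1$ infected first (ties broken arbitrarily). Fix some $v_i$. Each of its $k$ outgoing edges in $D(G, A)$ points to a cause $c$ infected strictly before $v_i$ (by Proposition~\ref{claim:longest} applied to the single-edge path). If this edge is a short tie, then by closure of $\mathcal{A}$ we have $c \in \mathcal{A}(S)$; combined with $c$ being older than $v_i$ and the fact that $v_1, \dots, v_i$ are the $i$ oldest nodes of $\mathcal{A}(S)$, we must have $c \in \{v_1, \dots, v_{i-1}\}$. Hence at most $i-1$ of $v_i$'s outgoing edges can be short ties, leaving at least $k - (i-1)$ long-tie outgoing edges, each incident to $v_i \in \mathcal{A}(S)$ and therefore counted by $\mathcal{L}(S)$.

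Finally, long-tie edges counted from different $v_i$'s are pairwise distinct, since they are outgoing weak ties issued by different source vertices in the directed Kleinberg model. Summing the per-vertex bounds gives
\[
\mathcal{L}(S) \;\ge\; \sum_{i=1}^{s}\bigl(k - (i-1)\bigr) \;=\; \sum_{i=0}^{s-1}(k-i),
\]
which is the claimed inequality. I do not expect a real obstacle: the only points requiring care are the closure of $\mathcal{A}(S)$ under short ties, the inclusion $\mathcal{A}(S) \subseteq \mathcal{A}(B)$, and the pairwise distinctness of the counted long ties, all of which are essentially immediate from the definitions.
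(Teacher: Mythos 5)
Your proof is correct and takes essentially the same approach as the paper's: the paper formalizes the identical counting argument as an induction on $s$ (topologically ordering $\mathcal{A}(S)$ and noting the $(\ell+1)$-st vertex must issue at least $k-\ell$ long ties), whereas you simply unroll that induction into a direct sum over $v_1,\dots,v_s$. Both arguments hinge on the same observations --- $\mathcal{A}$ is closed under short DAG edges, DAG out-neighbors are infected strictly earlier, and long outgoing ties from distinct vertices are distinct --- with your version making the last distinctness point explicit where the paper leaves it implicit.
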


\begin{proof}[Proof of Proposition~\ref{lem:DAG-long-ties-count}]

The proposition is true for $s$ = 0.  For the sake of induction, assume the proposition is true for when $s \leq  \ell < k$.
Let $|\mathcal{A}(S)| \geq \ell +1$.  Topologically order the vertices in $\mathcal{A}(S)$ and let $T$ be the set of the first $\ell$ vertices in this ordering and $v$ be the $\ell +1$st vertex in this ordering.

Then $|\mathcal{A}(T)| \geq \ell$ and so by the  inductive hypothesis, we have that $\mathcal{L}(T) \geq \sum^{\ell - 1}_{i =0 } (k-i) =  k+(k-1)+...+(k-\ell + 1)$.  However, $v$ has at most $\ell$ vertices before it in the ordering.  Thus, it necessarily has $k - \ell$ long-range ties because it required $k$ neighbors to be infected, but had at most $\ell$ short-range neighbors that contributed to this $k$.  So $ \mathcal{A}(T \cup \{v\}) = \mathcal{A}(T) \cup \{v\}  \subseteq  \mathcal{A}(S) $ has $k+(k-1)+...+(k-\ell + 1) + (k - \ell)$ long-range ties.  And this completes the inductive step.
\end{proof}

\begin{proof}[Proof of Lemma~\ref{lem:eitheror}]
Assume that  $A \cap \mathcal{A}(B) = \emptyset$.  We must find a connected subset of nodes of size $k^2-k+1$ in $G$, that has at least ${k+1 \choose 2}$ \emph{long} ties connected to its vertices. We consider two cases:
\begin{enumerate}
\item There exists a node $ v \in B$, such that $|\mathcal{A}(v)| \geq k$. \\
Sort $\mathcal{A}(v)$ in topological ordering
and consider the first node in the ordering, $v_0 \in \mathcal{A}(v)$, such that $|\mathcal{A}(v_0) |\geq k$.

We will show $\mathcal{A}(v_0)$ is a connected subset of nodes of size at most $k^2-k+1$ in $K^{W}_{m,\gamma}(n)$, that has at least ${k+1 \choose 2}$ \emph{long} ties connected to its vertices.
By $v_0$'s minimality, we have that $\forall u \in \mathcal{A}(v_0), \  |\mathcal{A}(u)| \leq k-1$.
Since the in-degree of $v_0$ is at most $k$, we have that
$$|\mathcal{A}(v_0)|  \leq  k (k-1) +1 = k^2 -k +1.$$
Also, because $|\mathcal{A}(v_0)| \geq k$, by Proposition~\ref{lem:DAG-long-ties-count}, we know that
$$\mathcal{L}(x_{0}) \geq k+(k-1)+\cdots + 1 +0 = {k+1 \choose 2}.$$

\item For all $v \in B$, $|\mathcal{A}(v)| \leq k-1$. \\
We will show $\mathcal{A}(B)$ is  a connected subset of nodes of size at most $k^2-k+1$ in $K^{W}_{m,\gamma}(n)$, that has at least ${k+1 \choose 2}$ \emph{long} ties connected to its vertices.
Now we have that $k \leq |\mathcal{A}(B)| \leq k(k-1)$.  The lower bound follows because $|B| = k$ and $B \subseteq \mathcal{A}(B)$.  The upper bound follows because $\mathcal{A}(B) = \bigcup_{v \in B} \mathcal{A}(v)$.
Because of the lower bound, by Proposition~\ref{lem:DAG-long-ties-count} , we know that
$$\mathcal{L}(B) \geq k+(k-1)+\cdots + 1 ={k+1 \choose 2} .$$
\end{enumerate}
Hence, in either cases, we find a connected subset of nodes of size at most $k^2-k+1$ in $G$, that has at least ${k+1 \choose 2}$ \emph{long} ties connected to its vertices.
\end{proof}

It only remains to prove Lemma~\ref{lem:shortage-long-ties-on-paths-gen}.



\begin{proof}[Proof of Lemma~\ref{lem:shortage-long-ties-on-paths-gen}]
Consider a specific set of $k^2-k+1$ connected nodes in $K^{W}_{m,\gamma}(n)$, $S=\{u_{1}, \dots, u_{k^2-k+1} \}$.
 We compute the probability that nodes of $S$  have collectively ${k+1 \choose 2}$ long range 
 ties to other nodes in $K^{W}_{m,\gamma}(n)$.
\begin{enumerate} [i)] \denselist
\item $P_{1}=\Prob{\text{Node $u_1$ has one long tie to any node in $K^{W}_{m,\gamma}(n)$}} $
\item $P_{2}=\Prob{\text{Nodes in $S$ have collectively ${k+1 \choose 2}$ long ties to other nodes in $K^{W}_{m,\gamma}(n)$}}$
\end{enumerate}
 \begin{align*}
P_1 &= \sum_{ \forall\ a \in G: d(u, a) \geq n^{1/2-\epsilon}} \frac{\lambda}{(d(u,a))^{\gamma}}
\in O \left (\frac{n}{(n^{\frac{1}{2} - 2\epsilon})^{\gamma}} \right )= O \left( n^{1-(1/2 - \epsilon)\gamma} \right)  \\
P_2  &\leq  \sum^{m(k^2-k+1)}_{i={k+1 \choose 2}} {m(k^2-k+1) \choose i} O\left( (P_{1})^{i} \right) (1-P_{1})^{m(k^2-k+1)-i} \\
 &\leq \sum^{m(k^2-k+1)}_{i={k+1 \choose 2}} {m(k^2-k+1) \choose i} O\left( (P_{1})^{i} \right) \\
 &\in  O \left (P_{1}^{{k+1 \choose 2}} \right ) \\
&=O \left ((n^{1-(1/2 - \epsilon)\gamma})^{{k+1 \choose 2}} \right )
\end{align*}
The outgoing edges of different nodes in $S$ are independent of each other. While the outgoing edges of a single node are not completely independent, they are close enough:  No matter what configuration the first (up to) $j-1$ ties have, the next tie has a $O(P_1)$  probability of being a long range tie.  Thus, the probability that any given vertex has $j$ long-range ties is bounded by $O((P_1)^j)$.

We know that the highest degree of nodes in the graph, $H(n)$, is at most $\log n$ with probability at least $1-n^{-m \log \log n +1}$.

Consider all connected subsets of size $k^2-k+1$ in $B$. If $H(n) \leq \log n$,
then there are at most $n\times \left((k^2 -k) \log n \right)^{k^2-k}$ such connected subsets in the graph, because there are $n$ choices for the first node and at most $(k^2-k) \log n$ choices for each of the next $k^2-k$ nodes.
If $H(n) > \log n$, the number $n^{k^2-k+1}$ gives a trivial bound on the number of all such subsets.
We want to compute the following probability:
\begin{align*}
P_{3} &=\Prob{\text{Any set of $k^2-k+1$ connected nodes has $\geq {k+1 \choose 2}$ long ties connected to its vertices}}
\end{align*}
We condition $P_3$ on $H(n)$ being less/bigger than $\log n$.
\begin{align*}
P_{3} & \leq P_{2}\times n \left ( (k^2 -k)\log n \right)^{k^2-k}  \times (1-n^{-m \log \log n +1} )  + P_{2} \times n^{k^2-k+1} \times n^{-m \log \log n +1} \\
&= O\left( n^{{k+1 \choose 2}(1-(1/2 - \epsilon)\gamma)}\times
  n \log^{k^2-k} n \right) \subseteq O\left( n^{\zeta} \right) 
\end{align*}
We want $P_3$ to be polynomially small, so we require that:
\begin{align*}
  1 + {k+1 \choose 2} \left (1-\frac{\gamma}{2}+\gamma\epsilon \right )  < \zeta<  0 
 \qquad &\Rightarrow  \quad 1+ {k+1 \choose 2} -{k+1 \choose 2} \gamma/2 < 0 \\
& \Rightarrow \quad  \gamma > \frac{2(k^2+k+2)}{k(k+1)}= \alpha_k
 \end{align*} 
But this means that 
this inequality can be satisfied because
\begin{align*}
1 +  {k+1 \choose 2}(1-(1/2 - \epsilon)\gamma) &<   1 +  {k+1 \choose 2}\left (1- \gamma/2 +\frac{\gamma(1-\alpha_k/\gamma)}{2} \right ) \quad 
\text{since  $\ \epsilon< \frac{1-\alpha_k/\gamma}{2}$} \\
                                             &  =  1 +  {k+1 \choose 2} (1-\alpha_k/2)  \\
                                             &  \leq  1 + {k+1 \choose 2} \left(1-\left (1 + \frac{1}{{k+1 \choose 2}} \right) \right)   \quad \text{since  $\ \alpha_k/2 =1 + \frac{1}{{k+1 \choose 2}}$} \\
                                             &  =  0  .
 \end{align*}
\end{proof}

\section{Related Work}\label{sec:related}

\paragraph{General models on complex contagions.} The model of $k$-complex contagions belongs to the general family of \emph{threshold models}, in which each node may have a different threshold on the number of infected edges/neighbors needed to become infected~\cite{Gran78}.  The threshold model is motivated by certain coordination games studied in the economics literature in which a user maximizes its payoff when adopting the behavior as the majority of its neighbors.  Many of the studies focus on the stable states, and structural properties that prevent complete adoption of the advanced technology (better behaviors)~\cite{Morris97contagion}. Montanari and Saberi~\cite{Mon09} analyze a particular dynamics and show that the convergence time is exponential in a quantity called the `tilted cutwidth' of the graph and characterized the convergence time for Kleinberg's small world model.  The main difference in our model, compared to the coordination game, is that we do not have a competing old behavior and all users once infected will remain infected.

\paragraph{Complex contagions on time-evolving graphs.} In our recent work to be reported elsewhere~\cite{Ebrahimi14how}, we analyzed the behavior of complex contagions in time evolved networks, one of which is the famous preferential attachment model~\cite{barabasi99emergence}. We proved that a $k$-complex contagion can spread in $O(\log n)$ number of rounds in such graphs. 

\paragraph{Complex contagions with randomly chosen seeds.}
In \emph{bootstrap percolation}~\cite{Chalupa79bootstrap,adler91bookstrap}, all nodes have the same threshold but initial seeds are randomly chosen. Here, the focus is to examine the threshold of the number of initial seeds with which the infection eventually `percolates', i.e. diffuses to the entire network. Studies have been done on the random Erdos-Renyi graph~\cite{Janson12} , the random regular graphs~\cite{BaloghP07}, and the configuration model~\cite{Amini10}, etc~\cite{Amini12}. All of these results show that for a complex contagion to percolate, the number of initial seeds is a growing function of the network size and in many cases a constant fraction of the entire network. In contrast, we always start with a constant number of seeds and we would like to examine whether a fast spreading is possible.

\bibliographystyle{abbrv}
\bibliography{complex-contagion-Kleinberg}

\appendix

%
%


\section{Proofs of Lemma~\ref{lem:recurrence-W-gen-k} and Lemma~\ref{lem:recurrence-I-gen-k}} 
\label{app:upperbound-lemmas}
In this section, we first provide the proof of Lemma~\ref{lem:recurrence-W-gen-k} for general $k$;
and then we prove Lemma~\ref{lem:recurrence-I-gen-k}. Both of these lemmas are computational
tasks and follow from definitions of the  $K^{W}_{m,\gamma}(n)$ and  $K^{I}_{m,\gamma}(n)$ models
and Definition~\ref{def:rec-spreading}. As before $\alpha_k =  \frac{2(k^2 + k + 2)}{k(k + 1)}$ and $\beta_k = \frac{2(k+1)}{k}$.

\begin{restate} {of Lemma~\ref{lem:recurrence-W-gen-k}}
Fix constants $2 < \gamma< \alpha_k$, $0 < \delta < 1- \alpha_k \gamma$,  $\frac{1-\delta}{\left({k + 1 \choose 2} + 1 \right)(1 - \delta) + {k + 1 \choose 2}\gamma/2} \leq c$, and $0 < d$, then
 $K^{W}_{m,\gamma}(n)$  is  $(\delta, c, d, k)$-recursively spreading.  
 \end{restate}

\begin{proof}
Let $r = \left(\frac{d 2^k\prod_{i = 1}^k i^{k-i+ 1}}{ \lambda^{{k + 1 \choose 2}}}\right)^c$
Fix  $\ell > (r \log ^c n)^{(\frac{1}{1-\delta})}$, and $\ell$-sized square $S$ and two disjoint $\ell^{(1-\delta)}$-sized subsquares $A$ and $B$, and partition $B$ into $\ell^{(1-\delta)}$/k disjoint pairs of (grid) neighbor nodes  $(u_1, \ldots, u_k)$. Assume that $a \in A$. Let $d(u,a)$ be the Manhattan distance of $u$ and $a$.
 \begin{enumerate}[i)] \denselist
 \item $P_{1}(\ell)=\Prob{\text{$u$ has a weak tie to $a \in A$, via a particular edge}}$
 \item $P_{2}(\ell)=\Prob{\text{$u$ has a weak tie to $A$, via a particular edge}} \geq |A| \times P_{1}(\ell)$
 \item $Q_{s}(\ell)=\Prob{\text{$u$ is connected to $A$, via $s$ distinct weak ties}} $
 \item $P_{4}(\ell)=\Prob{\text{$(u_1, \ldots, u_k)$ form a \emph{new seed} in $B$}} $
  \item$P_{5}(\ell)=\Prob{\text{a new seed forms in $B$}} $
 \end{enumerate}
 \begin{align*}
P_{1}(\ell) & \geq  \frac{\lambda}{(d(u,a))^{\gamma}}  =   \frac{\lambda}{ \sqrt{\ell^{\gamma}} } ;\\
  P_{2}(\ell) &\geq |A| \times P_{1}(\ell)  = \lambda \ell^{1-\delta-\gamma/2} ;\\
 Q_s(\ell) & \geq |\{ (a_1,\ldots, a_s) | a_1, \ldots, a_s \in A \}| \times \Prob{\text{$u$ has a weak tie to $a_1, \ldots, a_s$}}  \\
 &\geq  {|A| \choose s} P_{1}(\ell)^{s}  \geq \frac{ \lambda^s \ell^{s(1-\delta-\gamma/2)}}{2 s!};\\
 P_{4}(\ell) & \geq Q_{k}(\ell)\times Q_{k-1}(\ell) \times \cdots \times Q_{1}(\ell)  = \frac{ \lambda^{{k + 1 \choose 2}} \ell^{{k + 1 \choose 2}(1 -\delta- \gamma/2)} }{2^k\prod_{i = 1}^k i^{k  -i + 1}} ;\\
P_{5}(\ell)  &\geq 1 - (1 - P_{4}(\ell))^{|B|/2} \geq  1 - \exp\{P_{4}(\ell) |B|/2\} \\
&= 1 - \exp\{-\frac{ \lambda^{{k + 1 \choose 2}}}{2^k\prod_{i = 1}^k i^{k-i+ 1}} \ell^{\left ({k + 1 \choose 2} + 1 \right )(1 -\delta) - {k + 1 \choose 2}\gamma/2} \}
\end{align*}
We want $P_5(\ell)$ to be very close to $1$ and that means that we want the power of $\ell >0$, which
gives us:
$$
 {\left ({k + 1 \choose 2} + 1 \right )(1 -\delta) - {k + 1 \choose 2}\gamma/2} > 0  \ \Rightarrow \ \gamma <  \alpha_k,  \ \delta< 1 - \frac{ \gamma}{\alpha_k}.
$$
Also, note that $P_5$ is increasing in $\ell$. Therefore, the smallest probability happens when
then size of $\ell$ is the smallest; that is $\ell^{1-\delta} =r \log^c n$ and $\ell = (r \log^c n)^{1/(1-\delta)}$.
We want $1-P_{5}\left((r \log^c n)^{1/(1-\delta)} \right)$ to be polynomially small even when $\ell$
takes on its smallest value.
This requires that the power of $\log n$ be $ \geq 1$:
\begin{align*}
& \frac{c}{1-\delta} \left ({ \left ({k + 1 \choose 2} + 1 \right )(1 -\delta) - {k + 1 \choose 2}\gamma/2}  \right ) \geq 1  \\
& \Rightarrow \ c \geq  \frac{1- \delta}{\left ({k + 1 \choose 2} + 1 \right )(1 -\delta) - {k + 1 \choose 2}\gamma/2};
\end{align*}
and by replacing $r= \left(\frac{d 2^k\prod_{i = 1}^k i^{k-i+ 1}}{ \lambda^{{k + 1 \choose 2}}}\right)^c$, we get that
\begin{align*}
P_{5}(\ell)  & \geq 1 -\exp\{-\frac{ \lambda^{{k + 1 \choose 2}}}{2^k\prod_{i = 1}^k i^{k-i+ 1}} ((r \log^c n)^{\frac{1}{(1-\delta)}})^{\left ({k + 1 \choose 2} + 1 \right )(1 -\delta) - {k + 1 \choose 2}\gamma/2} \}  \\
&\geq 1-\exp\{-d \log n\} \geq 1-n^{-d}.
\end{align*}
Finally, notice that this probability is much stronger than the statement of Lemma.
\end{proof}


%


\begin{restate}{of Lemma~\ref{lem:recurrence-I-gen-k}} 
Fix constants $2 < \gamma< \beta_k$, $0 <  \delta < 1- \frac{\gamma}{\beta_k}$,  $\frac{1- \delta}{(k + 1)(1 -\delta) - k \gamma/2} \leq c$, and $0 < d$, then
 $K^{I}_{m,\gamma}(n)$  is  $(\delta, c, d, k)$-recursively spreading. 
 \end{restate}

\begin{proof}
Let $r= \left(\frac{d k^{(k^2\gamma + 1)} }{ \lambda^{k^2}}\right)^c$
Fix  $\ell > (r \log ^c n)^{(\frac{1}{1-\delta})}$, an $\ell$-sized square $S$ and two disjoint $\ell^{(1-\delta)}$-sized subsquares $A$ and $B$, and partition $B$ into $\ell^{(1-\delta)}/k$ disjoint pairs of (grid) neighbor nodes  $(u_1, \ldots, u_k)$. Assume that $a \in A$. Let $d(u,a)$ be the Manhattan distance of $u$ and $a$.
 \begin{enumerate}[i)] \denselist
 \item $P_{1}(\ell)=\Prob{\text{$u$ has a weak tie to $a \in A$, via a particular edge}}$
 \item $P_{2}(\ell)=\Prob{\text{$u$ has a weak tie to $A$, via a particular edge}} \geq |A| \times P_{1}(\ell)$
 \item $Q_{1}(\ell)=\Prob{\text{$u_1$ is connected to $A$, via $k$ distinct weak ties}} $
 \item $Q'_{s}(\ell)=\Prob{\text{$u_s$ has $k$ ties to $u_1$}}$
  \item $P_4(\ell)=\Prob{\text{$(u_1, \ldots u_k)$ form a \emph{new seed} in $B$}} $
  \item$P_{5}(\ell)=\Prob{\text{a new seed forms in $B$}} $
 \end{enumerate}
 \begin{align*}
P_{1}(\ell) & \geq  \frac{\lambda}{(d(u,a))^{\gamma}}  =   \frac{\lambda}{ \sqrt{\ell^{\gamma}} } ;\\
  P_{2}(\ell) &\geq |A| \times P_{1}(\ell)  = \lambda \ell^{1-\delta-\gamma/2} ;\\
 Q_1(\ell) & \geq P_{2}^k \geq  \lambda^k \ell^{k(1-\delta-\gamma/2)};\\
 Q'_s(\ell) & \geq \left(\frac{\lambda}{k^{\gamma}} \right)^k;\\
 P_{4}(\ell) & \geq Q_{1}(\ell)\times Q'_{2}(\ell) \times \cdots \times Q'_{k}(\ell)  = \frac{ \lambda^{(k^2)} \ell^{k(1 -\delta- \gamma/2)} }{k^{(\gamma k^2)}};\\
P_{5}(\ell)  &\geq 1 - (1 - P_{4}(\ell))^{|B|/2} \geq  1 - \exp\{P_{4}(\ell) |B|/k\}  \\
&= 1 - \exp\{-\frac{ \lambda^{(k^2)}}{k^{(\gamma k^2 +1)}} \ell^{(k + 1)(1 -\delta) - k \gamma/2} \}
\end{align*}
We want $P_5(\ell)$ to be very close to $1$ and that means that we want the power of $\ell >0$, which
gives us:
$$
 {(k + 1)(1 -\delta) - k \gamma/2} > 0  \ \Rightarrow \ \gamma <  \beta_k,  \ \delta< 1 - \frac{ \gamma}{\beta_k}.
$$
Also, note that $P_5$ is increasing in $\ell$. Therefore, the smallest probability happens when
then size of $\ell$ is the smallest; that is $\ell^{1-\delta} =r \log^c n$ and $\ell = (r \log^c n)^{1/(1-\delta)}$.
We want $1-P_{5}\left((r \log^c n)^{1/(1-\delta)} \right)$ to be polynomially small even when $\ell$
takes on its smallest value.
This requires that the power of $\log n$ be $ \geq 1$:
$$
\frac{c}{1-\delta}((k + 1)(1 -\delta) - k \gamma/2) \geq 1  \  \Rightarrow \ c \geq  \frac{1- \delta}{(k + 1)(1 -\delta) - k \gamma/2};
$$
and by replacing $r= \left(\frac{d k^{(k^2\gamma + 1)} }{ \lambda^{k^2}}\right)^c$, we get that
\begin{align*}
P_{5}(\ell)  &\geq 1 -  \exp\{-\frac{ \lambda^{(k^2)}}{k^{(\gamma k^2 ++1}}\left(  (r \log^c n)^{1/(1-\delta)} \right)^{(k + 1)(1 -\delta) - k \gamma/2} \\
&\geq 1-\exp\{-d \log n\} \geq 1-n^{-d}.
\end{align*}
Finally, notice that this probability is much stronger than the statement of Lemma.
\end{proof}

\end{document}